\let\ACMmaketitle=\maketitle
\renewcommand{\maketitle}{\begingroup\let\footnote=\thanks 
	\ACMmaketitle\endgroup}
\renewcommand\footnotetextcopyrightpermission[1]{}
\newif\ifcomments
  \providecommand\BibTeX{{%
    \normalfont B\kern-0.5em{\scshape i\kern-0.25em b}\kern-0.8em\TeX}}}
\newif\iftr
\newif\ifconf
\crefname{section}{§}{§§}
\Crefname{section}{§}{§§}
\newtheorem*{corollary*}{Corollary}
\newtheorem{lma}{Lemma}
\newtheorem{defn}{Definition}
\newtheorem{thm}{Theorem}
\newtheorem{crl}{Corollary}
\newtheorem{eg}{Example}
\newtheorem{observation}{Observation}
\newcommand{\xparting}{\mbox{$X$-Partitioning}\xspace}
\newcommand{\xpart}{\mbox{$X$-partition}\xspace}
\definecolor{darkgreen}{rgb}{0.0, 0.5, 0.13}
\DeclareMathOperator*{\argmin}{arg\,min}
\newcommand\macsection[1]{\noindent \textbf{#1. }}
\colorlet{hlcolor}{yellow!20}
\begin{document}
	\title[Tight I/O Bounds of Statically Analyzable Programs]{Pebbles, Graphs, 
	and a Pinch of Combinatorics: 
	Towards Tight I/O 
	Lower Bounds for Statically Analyzable Programs}
	
	\author{Grzegorz Kwasniewski, Tal Ben-Nun, Lukas 
	Gianinazzi, Alexandru Calotoiu, Timo Schneider, 
	Alexandros Nikolaos Ziogas, Maciej Besta, Torsten Hoefler}
	\affiliation{%
				\country{ETH Zurich, Switzerland}
			}

\renewcommand{\shortauthors}{G. Kwasniewski et al.}

\begin{abstract}
Determining I/O lower bounds is a crucial step in obtaining com-munication-efficient
parallel algorithms, both across the memory hierarchy and between processors.
Current approaches either study specific algorithms individually, disallow programmatic motifs such as recomputation, or produce asymptotic bounds that exclude important constants.
We propose a novel approach for obtaining precise I/O lower bounds on a general class of programs, which we call Simple Overlap Access Programs (SOAP).
SOAP analysis covers a wide variety of algorithms, from ubiquitous computational kernels to full scientific computing applications.
Using the red-blue pebble game and combinatorial methods, we 
are able to bound the I/O of the SOAP-induced Computational 
Directed Acyclic Graph (CDAG), taking into account multiple 
statements, input/output reuse, and optimal tiling.
To deal with programs that are outside of our representation (e.g., non-injective access functions), we describe methods to approximate them with SOAP.
To demonstrate our method, we analyze 38 different applications, including kernels from the Polybench benchmark suite, deep learning operators, and --- for the first time --- applications in unstructured physics simulations, numerical weather prediction stencil compositions, and full deep neural networks.
We derive tight I/O bounds for several linear algebra kernels, such as Cholesky 
decomposition, improving the existing reported bounds by a factor of two. For 
stencil applications, we improve the existing bounds by a factor of up to 14.
We implement our method as an open-source tool, which can derive lower bounds directly from provided C code.

\end{abstract}
	
	\vspace{-0.5em}
	\begin{CCSXML}
		<concept>
		<concept_id>10003752.10003777.10003780</concept_id>
		<concept_desc>Theory of computation~Communication 
		complexity</concept_desc>
		<concept_significance>500</concept_significance>
		<ccs2012>
		<concept>
		<concept_id>10003752.10003753.10003761.10003762</concept_id>
		<concept_desc>Theory of computation~Parallel 
		computing models</concept_desc>
		<concept_significance>500</concept_significance>
		</concept>
		</concept>
		<concept>
		<concept_id>10003752.10003809.10003636.10003808</concept_id>
		<concept_desc>Theory of computation~Scheduling 
		algorithms</concept_desc>
		<concept_significance>300</concept_significance>
		</concept>
		</ccs2012>
	\end{CCSXML}
	
	\ccsdesc[500]{Theory of computation~Communication 
	complexity}
	\ccsdesc[500]{Theory of computation~Parallel computing 
	models}
	\ccsdesc[300]{Theory of computation~Scheduling algorithms}

\vspace{-0.5em}
	\keywords{I/O complexity, red-blue pebble game, parallel 
	scheduling model}
	
	\maketitle

		\vspace{-1em}
	\section{Introduction}
	
I/O operations, both across the memory hierarchy and between parallel
processors, dominate time and energy costs in many scientific
applications~\cite{survey, kestor2013quantifying, tate2014programming, padal}.
It is thus of key importance to design algorithms with communication-avoiding or 
\emph{I/O-efficient} schedules~\cite{maciejBC,
edgarTradeoff}.
To inform, and occasionally inspire the development of such algorithms, one must first understand
\emph{the associated lower bounds on the amounts of communicated data}.
Deriving these bounds has always been of theoretical
interest~\cite{general_arrays, redblue}.
It is particularly relevant for dense linear algebra, as many
important problems in scientific computing~\cite{joost, rectangularML} and machine learning~\cite{ddl} rely on linear algebra operations such as matrix
factorization~\cite{meyer2000matrix, krishnamoorthy2013matrix} or tensor
contractions~\cite{solomonik2014massively}. 

Analyzing I/O bounds of linear algebra kernels dates back to the seminal work by
Hong and Kung{~\cite{redblue}}, who derived the first asymptotic bound for
matrix-matrix multiplication (MMM) using the red-blue pebble game abstraction.
This method was subsequently extended and used by other works to derive
asymptotic{~\cite{ElangoSymbolic}} and tight{~\cite{COSMA}} bounds for more
complex programs. 
Despite the expressiveness of pebbling, it is prohibitively hard to
solve for arbitrary programs, as it is PSPACE-complete in the general case~\cite{redbluehardSPAA}.

Since analyzing programs with parametric sizes disallows the  
construction of an explicit Computation Directed Acyclic Graph (CDAG), some 
form of parameterization is often needed~\cite{redbluewhite, olivry2020automated, chineseCNN1}. However, we argue that the widely-used 
approaches based on the Loomis-Whitney or the HBL 
inequalities~\cite{demmel1,demmelHBL,demmelCNN} (a) are often too restrictive, 
requiring the programs to be expressed in the polyhedral model to count the 
points in the projection polytopes; (b) do not capture pebbling motifs such as recomputation~\cite{olivry2020automated}; or (c) are limited to single-statement
programs~\cite{general_arrays, demmelCNN, demmel1, demmel2, 
demmelCNN, demmelHBL}.

	\begin{figure*}
	\vspace{-0.5em}
	\includegraphics[width=\textwidth]{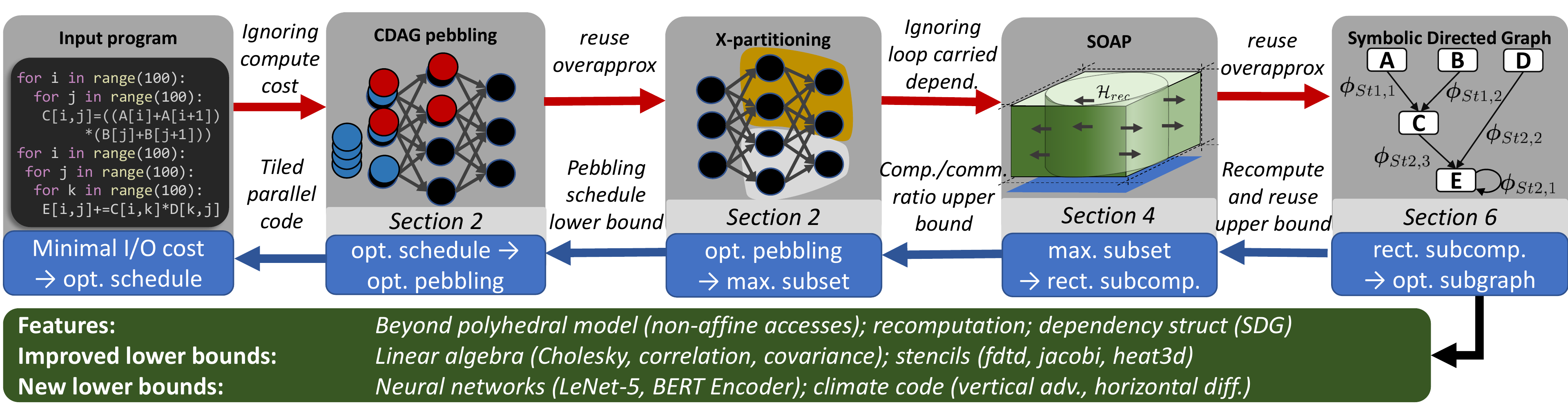}
	\vspace{-2em}
	\caption{High level overview of the combinatorial SOAP 
		analysis. 
		An input program's schedule is modeled as the 
		red-blue pebble game. The 
		\xparting abstraction relaxes the pebbling problem to 
		the graph 
		partition 
		problem. The SOAP abstraction utilizes the static 
		loop structure to 
		upper-bound 
		the size of the optimal \xpart. The Symbolic Directed 
		Graph (SDG) models 
		inter-statement data dependencies. Our method derives 
		I/O lower bounds 
		together with accompanying tile sizes and loop 
		fusions that can be used by 
		a compiler to generate an I/O optimal parallel code.}
	\vspace{-1em}
	\label{fig:soap_overview}
\end{figure*}

In our work, we take a different approach 
based on a combinatorial method. We directly map each elementary computation 
to a vertex in a parametric CDAG, which allows us not only to operate on 
unstructured iteration domains, but also to precisely count the sizes of 
dominator sets and model vertex recomputation. Furthermore, to handle 
complex data dependencies in programs that evaluate multiple arrays, we 
introduce the Symbolic Directed Graph (SDG) abstraction, which encapsulates the 
data flow between elementary computations. This allows us to cover a wider 
class of programs and handle more complex data flow.

To enable precisely mapping every data access to the parametric CDAG vertex, 
we introduce a class of 
\textbf{\textit{Simple Overlap Access Programs}} (SOAP), and present a general 
method to derive \textit{precise} I/O bounds of programs in this class. 
Specifically, SOAPs are defined as loop nests of statements, whose data access 
sets can be modeled as injective functions, and their per-statement data 
overlap can be expressed with constant offsets. 
For programs that do not directly adhere to SOAP, with nontrivial overlaps and 
non-injective access functions, we show that under a set of assumptions, we can 
construct SOAP ``projections'' of those programs, which can be analyzed in the 
same way. Our method strictly contains the polyhedral model and 
associated analysis methods.

To show the breadth of our approach, we demonstrate SOAP analysis on a set of 38 applications, taking Python and C codes as input to create the SDG.
This automated analysis procedure generates symbolic bounds, which match or 
improve upon previously-known results. Notably, we tighten the known I/O lower 
bounds for numerous programs, including stencils by up to a factor of 14, linear algebra kernels (e.g., Cholesky factorization by a factor of two), and the core convolution operation in deep learning by a factor 
of 8. 

Since our derivation of the bounds is 
constructive --- i.e., it provides loop tilings and fusions 
after relaxing loop-carried dependencies --- the results can 
be used by a compiler to generate I/O optimal parallel 
codes. This can both improve existing schedules and possibly reveal new 
parallelization dimensions.

\noindent
The paper makes the following contributions:
\begin{itemize}[leftmargin=*]
	\item A combinatorial method for precisely counting the number of data accesses in parametric CDAGs.
	\item A class of programs --- SOAP --- on which I/O lower bounds can be 
	automatically derived. 
	\item Symbolic dataflow analysis that extends SOAP to 
	multiple- \linebreak statement programs, capturing input and output reuse 
	between 
	statements, as well as data
	recomputation.
	\item I/O analysis of 38 scientific computing kernels, improving existing 
	bounds~\cite{olivry2020automated,chineseCNN1} by up to a factor of 14, and 
	new 
	lower bounds for applications in deep learning, unstructured physics 
	simulation, and numerical weather prediction.
\end{itemize}

	\vspace{-1em}
\section{Background}
\label{sec:background}

We first present several fundamental concepts used throughout the paper. We 
introduce program, memory, and execution models that are based on the work by 
Hong and Kung~\cite{redblue}. We then present a general approach for deriving 
I/O lower bounds based on graph partitioning abstractions. 
The bird's eye view of our method is presented in 
Figure~\ref{fig:soap_overview}.

	\vspace{-0.5em}
\subsection{General Approach of Modeling I/O Costs}
\noindent
\textbf{Program model: CDAG. }
One of the most expressive ways to model executions of arbitrary programs is a
Computation Directed Acyclic Graph (CDAG)~\cite{redblue, redblueHierarchy,
redbluewhite, COSMA, chineseCNN1} $G=(V,E)$, where vertices represent data 
(either
inputs or results of computations) and edges represent data dependencies. That
is, for $u, v \in V$, a directed edge $(u,v) \in E$ signifies that $u$ is
required to compute $v$. Given vertex $v$, vertices $\{u : (u,v) \in E\}$ are
referred to as \emph{parents} of $v$. Analogously, $\{u: (v,u) \in E\}$ are
\emph{children} of $v$. Vertices with in-degree (out-degree) zero are denoted
\emph{program inputs} (\emph{program outputs}).

\noindent
\textbf{Memory model: red-blue pebble game~\cite{redblue}.}
Programs
are executed on a sequential machine equipped with a two-level memory system,
which consists of a fast memory of limited size and unlimited slow memory.  The
contents of the fast memory are represented by $S$ red pebbles. A red pebble
placed on a vertex indicates that the data associated with this vertex resides
in the fast memory. Analogously, data residing in the slow memory is
represented with blue pebbles (of unlimited number).

\noindent
\textbf{Execution model: graph pebbling. } An 
execution of a program represented by a CDAG $G=(V,E)$ is modeled as a sequence
of four allowed pebbling moves: 1) placing a red pebble on a vertex which has a
blue pebble (load), 2) placing a blue pebble on a vertex which has a red pebble
(store), 3) placing a red pebble on a vertex whose parents have red pebbles
(compute) 4) removing any pebble from a vertex (discard). At the program start,
all input vertices have blue pebbles placed on them. Execution finishes when
all output vertices have blue pebbles on them.  A sequence of moves leading
from the start to the end is called a graph \emph{pebbling}~$P$. The number of
load and store moves in $P$ is called the \emph{I/O cost of $P$}.
\textbf{\emph{The I/O cost $Q$ of a program $G$ is the minimum cost among all
valid pebbling configurations}}.  A pebbling with cost $Q$ is called optimal.

\begin{figure*}[h]
	\vspace{-1.5em}
	\includegraphics[width=1.0\textwidth]{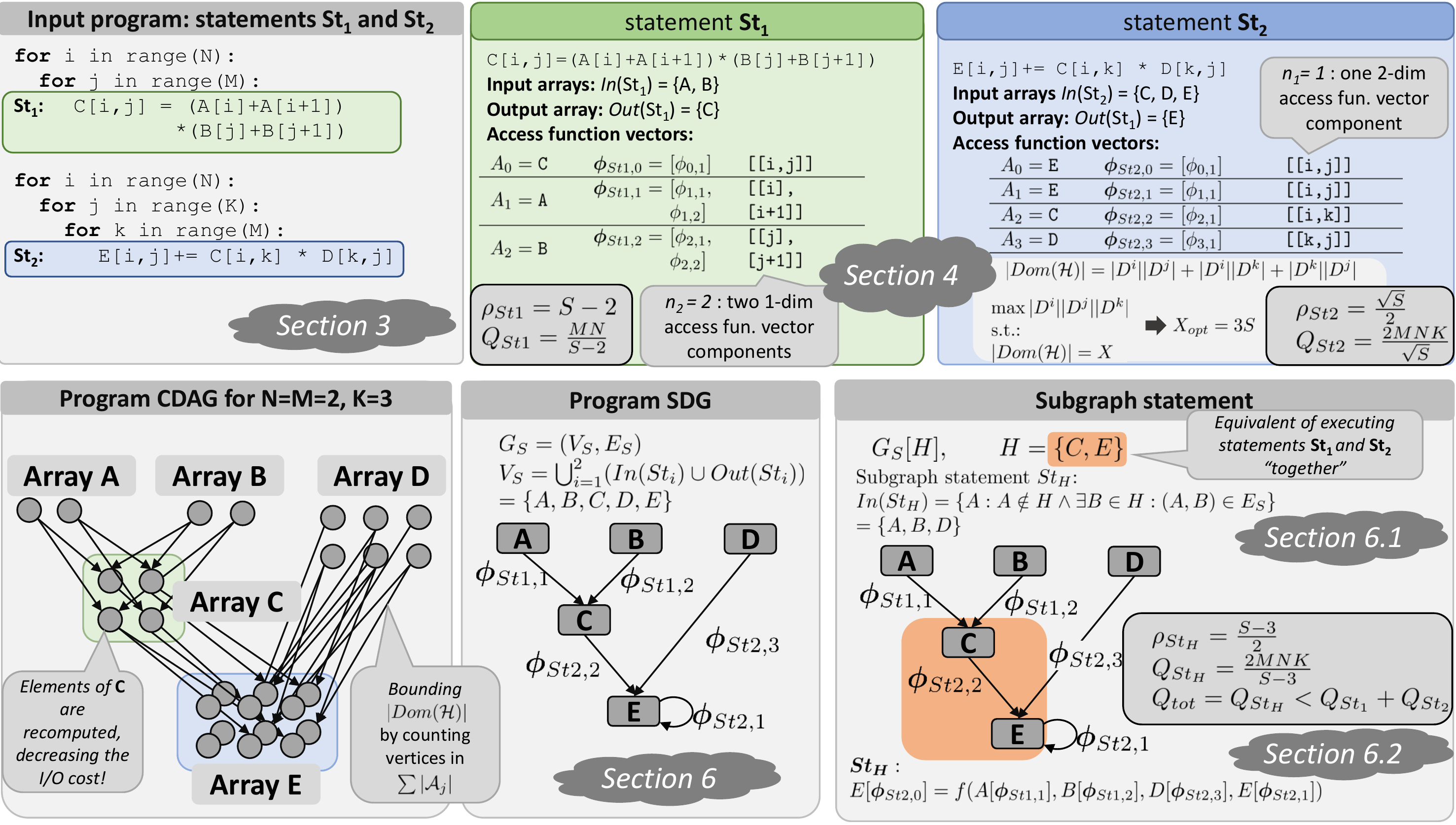}
	\vspace{-1.0em}
	\caption{From the input code to the I/O lower bounds. 
		First, for each 
		statement, the access function vectors $\bm{\phi}$ 
		are extracted from the 
		input program (green and blue fields).  For each 
		statement, the size of its 
		dominator set is obtained using 
		Lemma~\ref{lma:rectTilingAPP} 
		(Section~\ref{sec:maxsubcomp}), and then, the 
		I/O lower bound is obtained using 
		inequality~\ref{eq:final_bound} 
		(Section~\ref{sec:final_bound}). For programs that 
		contain multiple 
		statements, the SDG is constructed 
		(Section~\ref{sec:multistatement}) and 
		all valid subgraph statements are evaluated 
		(Section~\ref{sec:sdg_subgraphs}). Lastly, the final 
		I/O lower bound is 
		obtained (Section~\ref{sec:sdg_lowerbounds}).
	}
	\vspace{-1em}
	\label{fig:soap_flow}
\end{figure*}

\vspace{-1.0em}
\subsection{I/O Lower Bounds}
\label{sec:intro_lowbound}
\noindent
Assume that the optimal pebbling $P_{opt}$ is given. For any constant
$X > S$ we can 
partition this sequence of moves into subsequences, such 
that in each subsequence except of the last one, exactly 
$X-S$ load/store moves are performed (the last subsequence contains at most 
$X-S$ 
load/store moves). Denote the number of these subsequences as $h$. Then observe 
that 
$(X-S)(h-1) \le Q \le (X-S)h$.

\noindent
\textbf{Graph pebbling vs graph partitioning. } Since finding 
$P_{opt}$ is 
PSPACE complete~\cite{redblueHard_}, we seek to 
derive a lower bound of $Q$ from the structure of $G$. Observe that 
the set of vertices which are computed in each subsequence 
defines a 
subgraph $\mathcal{H} \subseteq G$.
By this 
construction, computing vertices in $\mathcal{H}$ requires 
$X-S$ load/store operations in the optimal schedule. The 
number of subsequences $h$ may be bounded by a particular 
partitioning of 
$G$. To do this, we need to introduce two vertex sets defined 
for any 
subgraph of $G$. 

\noindent
\textbf{Dominator and minimum sets~\cite{redblue}.}
Given $\mathcal{H} 
\subseteq G$, a \emph{dominator set} 
$\mathit{Dom}(\mathcal{H})$ is a set of vertices such 
that every path from an input to any vertex in $\mathcal{H}$  
must 
contain at least one vertex in 
$\mathit{Dom}(\mathcal{H})$. 
The \emph{minimum set} 
$\mathit{Min}(\mathcal{H})$ is a
set of all vertices in $\mathcal{H}$ that do not have any 
child in $\mathcal{H}$.
To avoid the ambiguity of non-uniqueness of dominator set 
size,
we denote a \emph{minimum dominator set} 
$\mathit{Dom}_{min}(\mathcal{H})$ to be a dominator set with 
the 
smallest size.

\noindent
\textbf{\xparting: bounding I/O cost. }
Introduced by Kwasniewski et al.~\cite{COSMA}, \xparting 
generalizes the 
\emph{S-partitioning} from Hong and 
Kung~\cite{redblue}.
Given a constant $X$, an \xpart of $G=(V,E)$ 
is a 
collection of $s$ mutually disjoint subsets $\mathcal{H}_i 
\subseteq V$ (referred to as 
\emph{subcomputations})
$\mathcal{P}(X) = \{\mathcal{H}_1, 
\dots, \mathcal{H}_s\} :$ $\forall_{i \ne j} \mathcal{H}_i 
\cap 
\mathcal{H}_j = \emptyset \land$ 
$\bigcup_i 
\mathcal{H}_i = V$  with two additional 
properties:
\begin{itemize}[leftmargin=*]
	\item there are no cyclic dependencies between 
	subcomputations: \linebreak $\forall \mathcal{H}_i \ne 
	\mathcal{H}_j : (\exists 
	(u_1,v_1) \in E \text{ s.t. } u_1 \in \mathcal{H}_i \land v_1 \in 
	\mathcal{H}_j) \implies (\nexists (v_2,u_2) \in E \text{ s.t. } u_2 \in 
	\mathcal{H}_i \land v_2 \in \mathcal{H}_j)$
	\item $\forall \mathcal{H} \in \mathcal{P}(X)$, 
	$\left|{Dom}_{min}\left( 
	\mathcal{H} \right)\right| \le X$ and  $\left| 
	{Min}\left(\mathcal{H}_h\right) \right| \le 
	X$.
\end{itemize}
The authors prove that for any $X>S$, the optimal pebbling 
$P_{opt}$ has an associated \xpart $\mathcal{P}_{opt}(X)$ 
s.t. 
$|\mathcal{P}_{opt}(X)| = h$. 

\noindent
\textbf{Computational intensity.} 
In previous works it was proven that (a) $Q$ is lower bounded by the 
number of 
subsequences $h$ in the optimal pebbling 
$P_{opt}$~\cite{redblue}; (b) $h$ is lower bounded by the size of 
the smallest \xpart $|\mathcal{P}_{min}(X)|$ for any value of 
$X > S$~\cite{COSMA}; (c) 
$|\mathcal{P}_{min}(X)|$ is bounded by the maximum size of a 
single subcomputation $|\mathcal{H}_{X,max}|$ in any valid 
\xpart: $|\mathcal{P}_{min}(X)| \ge 
{|V|}/{|\mathcal{H}_{X,max}|}$ \cite{COSMA}; and (d) if 
$|\mathcal{H}_{X, max}|$ can be expressed as a 
function of $X$, that is, $\chi(X) \equiv |\mathcal{H}_{X, max}|$, 
then 
$Q$ is bounded by
\begin{equation}
\label{eq:general_lower_bound}
Q \ge |V|\frac{X_0 - S}{\chi(X_0)},
\end{equation}
\noindent 
where $X_0 = \argmin_X 
\frac{\chi(X)}{X-S}$ (Lemma 2 in Kwasniewski et  
al.~\cite{confluxArxiv}). 
The expression $\rho = \frac{\chi(X)}{X-S}$ is called the
\emph{computational intensity}.

\section{Simple Overlap Access Programs}
\label{sec:soap}

In Section~\ref{sec:background}, we show how
the I/O cost of a program can be bounded by the maximum size 
of a subcomputation $\mathcal{H}$ in any valid \xpart of 
program CDAG. We now introduce \textbf{Simple Overlap Access 
Programs (SOAP)}: a class of programs for which we 
can derive tight analytic bounds of $|\mathcal{H}|$. 
We leverage the SOAP structure and design an end-to-end 
method for deriving I/O lower bounds of input programs 
(summarized in Figure~\ref{fig:soap_flow}).

\noindent
\textbf{What is SOAP?} Before introducing the formal definition, we start with 
an 
illustrative example, which we use in the following sections.

\begin{eg}
	\label{eg:soap_example}
	Consider the following 3-point stencil code (we use the Python syntax in 
	code listings):
\end{eg}
\vspace{-0.5em}
\begin{lstlisting}[basicstyle=\ttfamily\footnotesize]
for t in range(1,T):
  for i in range(t,N-t):
    A[i,t+1]=(A[i-1,t] + A[i,t] + A[i+1,t])/3 + B[i] 
\end{lstlisting}
\noindent
\emph{This is what we will refer to as a single-statement SOAP. The program 
consists of one statement $St: $ \texttt{A[i,t+1]=(A[i,t] + ... } which is 
placed in two nested loops. All accessed data comes from static, disjoint, 
multi-dimensional arrays (\texttt{A} and \texttt{B}). Furthermore, different 
accesses to the same array (array \texttt{A} is referenced by \texttt{[i,t+1]}, 
\texttt{[i-1,t]}, \texttt{[i,t]},\texttt{[i+1,t]}) are offset by a constant 
stride \texttt{[0,1]}, \texttt{[-1,0]},\texttt{[0,0]}, \linebreak 
\texttt{[1,0]}. We 
denote 
such access pattern as a \textbf{simple overlap} and it is a defining property 
of 
SOAP.}

\noindent
\textbf{Why SOAP? } We use the restriction on the access pattern to precisely 
count the number of vertices in 
$Dom(\mathcal{H})$. If we allow arbitrary overlap of array 
accesses, we need to conservatively assume a maximum 
possible overlap of accessed vertices. This reduces the lower 
bound on $|Dom(\mathcal{H})|$, which, in turn, increases the 
upper bound on $|\mathcal{H}|$, providing less-tight I/O 
lower bound for a program.

\noindent
\emph{This is not a 
	fundamental limitation of our method. However, it allows a
	fully automatic derivation of tight I/O lower bounds for input 
	programs. If the restriction is violated, additional 
	assumptions on the access overlap are needed 
	(Section~\ref{sec:beyondSOAP})}.

\noindent
\textbf{SOAP definition. }
A program is a sequence of statements $St_1, \dots, St_k$. Each such statement 
$St$ is a constant time computable function $f$ 
enclosed 
in a loop nest of the 
following form:  

\vspace{-0.5em}
{
	\footnotesize
	\begin{align}
	\nonumber
	\text{\textbf{for }} \psi^1 \in \mathcal{D}^1:& \\
	\nonumber
	\dots & \\
	\nonumber
	\text{\textbf{for }}&  \psi^\ell \in \mathcal{D}^\ell(\psi^1, \dots, 
	\psi^{\ell-1}): \\
	\nonumber
	&St: A_0[\bm{\phi_0}(\bm{\psi})] \leftarrow 
	f(A_1[\bm{\phi_1}(\bm{\psi})], A_2[\bm{\phi_2}(\bm{\psi})], 
	\dots, 
	A_m[\bm{\phi_m}(\bm{\psi})])
	\end{align}
}
\vspace{-0.5em}

\noindent

	\vspace{-1.5em}
	\noindent
where:
\begin{enumerate}[leftmargin=1.0em]
	
\item The \emph{statement} $St$ is nested in a loop nest 
of depth $\ell$.

\item Each loop in the $t$th level, $t = 1,\dots, \ell$ 
is associated with 
its \emph{iteration variable} $\psi^t$, which 
iterates over 
its domain $\mathcal{D}^t \subset \mathbb{N}$. Domain 
$\mathcal{D}^t$ may 
depend on iteration 
variables from outer loops $\psi^1, \dots,\psi^{t-1}$ 
(denoted as $\mathcal{D}^t(\psi^1, \dots, \psi^{t-1})$).

\item All $\ell$ iteration variables form the 
\emph{iteration vector} 
$\bm{\psi}= [\psi^1, \dots, \psi^\ell]$ and we 
define the \emph{iteration 
	domain $\bm{\mathcal{D}}$} as the set of 
all values the iteration 
vector iterates over during the entire execution of the 
program 
$ \bm{\mathcal{D}} \subset 
\mathbb{N}^\ell$.

\item  The 
dimension of array $A_j$ is denoted as 
$dim(A_j)$.

\item Elements of $A_j$ are 
referenced 
by an \emph{access 
	function vector} $\bm{\phi}_{j}$
which maps $dim(A_j)$ iteration variables $\bm{\psi}_j = 
[\psi_j^1, \dots, \psi_j^{dim(A_j)}]$ to \textbf{a set of 
}$\bm{n_j}$ \textbf{elements} from $A_j$, that is 
$\bm{\phi}_{j}: \mathcal{D}_j^1 \times, 
\dots, \times
\mathcal{D}_j^{dim(A_j)}$ $\rightarrow$ 
$\Big(\mathbb{N}^{dim(A_j)}\Big)^{n_j}$. We then write 
$\bm{\phi}_{j} = [\bm{\phi}_{j,1}, \dots, 
\bm{\phi}_{j,n_j}]$, where $\bm{\phi}_{j,k}:  
\mathcal{D}_j^1 
\times, 
\dots, 
\mathcal{D}_j^{dim(A_j)}$ $\rightarrow$ 
$\mathbb{N}^{dim(A_j)}, k = 
1,\ 
\dots, n_j$. Furthermore, all access function 
components 
$\bm{\phi}_{j,k}(\bm{\psi}_j)$ are injective.

\item All $n_j$ access 
function vector's components
are equal up to a constant translation vector, that 
is, $\forall k=1,\dots, n: \bm{\phi}_{j,k}(\bm{\psi}) = 
\bm{\phi}_{j,1}(\bm{\psi}) + \bm{t}_{k}$, where 
$\bm{t}_{k} = [t_{k}^1, \dots, t_{k}^{dim(A)}] \in 
\mathbb{N}^{dim(A_j)}$. We call 
$\bm{\phi}_j$ the \textbf{simple overlap access}.

\item Arrays $A_1, \dots A_m$ are disjoint. If the output array $A_0$ is also
used as an input, that is, $A_0 \equiv A_j, j \ge 1$, then $\bm{\phi}_0 \cup 
\bm{\phi}_j$ is also the simple overlap access (c.f. 
	Example~\ref{eg:soap_example}).

\item 
Each execution of statement 
$St$ is an evaluation of $f$ for a given value of iteration vector $~\bm{\psi}$.
\end{enumerate}

\noindent
\textbf{Iteration variables and iteration vectors. } Formally, an iteration 
variable $\psi^t$ is an iterator: an object which takes values from its 
iteration domain during the program execution. However, if it is 
clear from the context, we will refer to a particular \emph{value} of 
the iteration variable simply as $\psi^t$ (or a value of iteration vector as
$\bm{\psi}$).

\noindent
\textbf{Vertices as iteration vectors. } Since by definition 
of CDAG, each computation corresponds to a different vertex, 
and by definition of SOAP, every statement execution is associated 
with a single iteration vector $\bm{\psi}$, every non-input 
vertex in $G$ is uniquely associated with an iteration vector $\bm{\psi}$. 
Input vertices are referred to by their access 
function vectors $u = A_j[\bm{\phi}_{j,k}(\bm{\psi})]$.
\textbf{We further define CDAG edges as follows:} for every value of iteration 
vector $\bm{\psi}$, we add an edge from all accessed elements to the vertex 
associated with $\bm{\psi}$, that is: $E = \{(u,v): u = 
A_j[\bm{\phi}_{j,k}(\bm{\psi})], v = \bm{\psi}, 
\bm{\psi} \in \bm{\mathcal{D}} \}$.

\noindent
\textbf{\xparting on SOAP's CDAG. } Recall that our objective 
is to bound the maximum size of any subcomputation 
$|\mathcal{H}|$. 
Given 
pebbling $P$ and an associated \xpart $\mathcal{P}(X)$, 
\emph{every subcomputation $\mathcal{H} \in \mathcal{P}(X)$ 
	is 
	therefore associated with the set of iteration vectors $\bm{\psi}$ 
	of the vertices computed in $\mathcal{H}$.}
In the following section we will derive it 
by counting how many non-input vertices (iteration vectors) 
can $\mathcal{H}$ contain by bounding its dominator set size 
$|Dom(\mathcal{H})|$ - again, by counting vertices 
corresponding to each access 
$A_j[\bm{\phi}_{j,k}(\bm{\psi})]$. 

				\hspace{-1.5em}
\begin{table}[t]
	\vspace{0.0em}
	\setlength{\tabcolsep}{2pt}
	\renewcommand{\arraystretch}{1.2}
	\centering
	\footnotesize
	\sf
		\begin{tabular}{@{}l|ll@{}}
			\toprule
			\multirow{9}{*}{\begin{turn}{90}\textbf{SOAP 
			definition 
						(\S~\ref{sec:soap})}\end{turn}}
			& $A_0$ & \makecell[l]{Output array of statement 
			\emph{St} 
			(may 
			overlap \\
			with input arrays).} \\ 
			& $A_j,$~~~$j = 1,\dots, m$ & Mutually disjoint 
			input 
			arrays of 
			statement \emph{St}. \\ %
			& $\bm{\psi} = [\psi^1, \dots, \psi^\ell]$ 
			& 
			Iteration vector composed of $\ell$ iteration 
			variables.\\
			& $\mathcal{\bm{D}} \subseteq \mathcal{D}^1 
			\times, \dots, \times 
			\mathcal{D}^\ell$
			& 
			\makecell[l]{Iteration domain: a set of 
			values that iteration \\ vector
			$\bm{\psi}$ takes during the entire program 
			execution.\vspace{0.4em}}\\ 
			& $\bm{\phi}_{j} = [\bm{\phi}_{j,1}, \dots, 
			\bm{\phi}_{j,n_j}]$ &  \makecell[l]{Access 
			function vector 
			that maps 
				$dim(A_j)$  
				variables \\
				$[\psi$\scalebox{0.7}{$_j^1$}$, \dots, \psi$ 
				\scalebox{0.7}{$_j^{dim(A_j)}$} $]$
				to $n_j$ elements in array 
				$A_j$.\vspace{0.4em}}\\
			& \scalebox{0.85}{$\bm{t}_{j,k} = [t_{j,k}^1, 
			\dots, 
			t_{j,k}^{dim(A_j)}]$} &
			\makecell[l]{Translation vector of $k$-th access 
			function vector's\\ component $\bm{\phi}_{j,k}$,
			that is $\bm{\phi}_{j,k} \equiv \bm{\phi}_{j,1}, 
			k 
			= 
			1, \dots, n_j$
			}\\
			\midrule
			\multirow{13}{*}{\begin{turn}{90}\textbf{ 
						Single-statement subcomputation 
						(\S~\ref{sec:single_statement})}\end{turn}}
			& $\mathcal{P}(X) = 
			\left\{\mathcal{H}_1,\dots,\mathcal{H}_s\right\}$ 
			&  
					\makecell[l]{An 
						\xpart of CDAG $G=(V,E)$ composed \\ 
						of 
						$s$ 
						disjoint subcomputations.
					}\\		
			& $\bm{D} = D^1 \times, \dots, \times D^\ell$ & 
			\makecell[l]{Subcomputation domain: a Caresian 
			product of \\ ranges 
			of $\ell$ iteration variables 
			during 
			$\mathcal{H}$.\vspace{0.3em}}\\	
			& $\mathcal{H} \subseteq \bm{D} \subseteq V$& 
		\makecell[l]{Subcomputation $\mathcal{H}$ 
			uniquely defined 
			by 
			a set \\ of $|\mathcal{H}|$
			iteration vector's 
			values $\bm{\psi} \in \bm{D}$ taken \\ during 
			$\mathcal{H}$.
			If $\mathcal{H} = \bm{D}$, we call it a 
			\emph{rectangular} \\ \emph{subcomputation} 
			$\mathcal{H}_{rec}$.\vspace{0.3em}}\\
			& $\mathcal{A} = \bm{\phi}[\mathcal{H}]$ & 
			\makecell[l]{Access 
			set: a set of vertices from array $A$ that are \\ 
			accessed by $\bm{\phi}$ during $\mathcal{H}$.} \\
& $\hat{t}^i = \{t_{1}^i, \dots, 
t_{n}^{i}\} \setminus \{0\}$&
\makecell[l]{Access offset set: set of all non-zero $i$th 
coordinates \\
among $n$ translation vectors $\bm{t}_k, k = 1, 
\dots, n$.
}\\	
			& $\mathit{Dom}(\mathcal{H})$ & Dominator set of 
			subcomputation $\mathcal{H}$.\\		
			& $\rho$ & The computational intensity of 
			the \xpart.\\
			& $Q \ge |\bm{\mathcal{D}}|\frac{\sum_{j = 
					1}^{m}|\mathcal{A}_j(X_0)| - 
					S}{\prod_{t=1}^{\ell}|D^t(X_0)|}$ & A 
					number of I/O operations of a schedule. 
			\\			
			\midrule
			\multirow{5}{*}{\begin{turn}{90}
					\textbf{\hspace{1.5em}SDG 
						(\S~\ref{sec:multistatement})}
			\end{turn}} 
			& $G_S = (V_S, E_S)$ &  \makecell[l]{Symbolic 
			Directed Graph, where every array \\ accessed in 
			a 
			program is a vertex, and edges \\ represent data 
			dependencies between them.	\vspace{0.2em}}\\
			& $I \subset V_S$ &  \makecell[l]{Set of 
			read-only arrays of the program.}\\
			& $G_S[H], H \subset V_S \setminus I$ &  
			\makecell[l]{SDG subgraph that represents a 
			subcomputation \\ in which
			 at least one vertex from 
			every \\ array in $H$ is computed.\vspace{0.2em}} 
			\\
			& $\text{\emph{St}}_H$ &  \makecell[l]{Subgraph 
			SOAP 
			statement.}\\		
			\bottomrule
		\end{tabular}%
	\caption{
\vspace{0.5em}{Notation used in the paper.}
	}
	\vspace{-2.5em}
	\label{tab:symbols}
\end{table}

	\vspace{-0.5em}
\section{I/O Lower Bounds For Single-Statement SOAP}
\label{sec:single_statement}
We now derive the I/O bounds for programs 
that contain only one SOAP statement. We start with 
introducing necessary definitions that allow us to bound the 
size of a \emph{rectangular subcomputation}. The summary of 
the notation is presented in Table~\ref{tab:symbols}.
\vspace{-0.5em}
\subsection{Definitions}
\label{sec:single_statement_definitions}

\begin{defn} \textbf{Subcomputation domain.} Denote the 
	set of all values which 
	iteration variable $\psi^t$
	takes during 
	subcomputation 
	$\mathcal{H}$ as $D^t \subset \mathcal{D}^t, t = 
	1,\dots,\ell$. Then, the \textbf{subcomputation domain}  
	$\bm{D}(\mathcal{H}) \subseteq \bm{\mathcal{D}}$ is a 
	Cartesian product of 
	ranges of all $\ell$ iteration variables which they take 
	during $\mathcal{H}$, that is 
	$\bm{D}(\mathcal{H}) = 
	D^1 
	\times, \dots,\times D^\ell$. We therefore have $\mathcal{H} 
	\subseteq \bm{D}(\mathcal{H})
	\subset 
	\mathbb{N}^\ell$. If it is clear from the context, we will 
	sometimes denote $\bm{D}(\mathcal{H})$ simply as $\bm{D}$.
\end{defn}

\begin{eg}
	Recall the program from Example~\ref{eg:soap_example}.
	Consider subcomputation $\mathcal{H}$ in which 
	$\mathtt{t} \in 
	\{1,2\}$ and $\mathtt{i} \in \{1,2\}$. Then, 
	subcomputation domain $\bm{D} = \{1,2\} \times \{1,2\} 
	= 
	\{[1,1],[1,2], $
	$[2,1], [2,2]\}$, but computation itself can contain at 
	most 3 elements $\mathcal{H} \subseteq $
	$\{[1,1], [1,2], [2,2]\}$, since $\bm{\psi} = [2,1] \notin 
	\bm{\mathcal{D}}$ does not belong to the iteration 
	domain. 
\end{eg}

\begin{defn}
	\textbf{Access set and access subdomain.} 
	Consider input array $A$ and its access function vector 
	$\bm{\phi}$.
	Given 	 
	$\mathcal{H}$, the \textbf{access set} 
	$\mathcal{A}$ of $A$ is the set of vertices belonging to  
	$A$ 
	that are accessed during $\mathcal{H}$, that is 
	$\mathcal{A} = \bm{\phi}[\mathcal{H}] = 
	\{A[\bm{\phi}(\bm{\psi})]: \bm{\psi} \in \mathcal{H}\}$. 
	If function $\bm{\phi} = 
	[\bm{\phi}_1, \dots, \bm{\phi}_n]$ accesses $n$ vertices 
	from $A$, we analogously define access sets for each  
	access function component 
	$\bm{\phi}_k[\mathcal{H}], k = 1, \dots, n$. We then have 
	$\mathcal{A} = \bigcup_{k=1}^n\bm{\phi}_k[\mathcal{H}]$.
	The \textbf{access subdomain} $\bm{D}(\mathcal{A})$ is minimum bounding 
	box of the access set $\mathcal{A}$.

\end{defn}

\begin{eg}
For program in Example~\ref{eg:soap_example},
consider subcomputation $\mathcal{H}$ evaluated on only 
one 
iteration vector 
$\mathcal{H} = [i=2,j=2]$.
We have two accessed arrays \texttt{A} and \texttt{B}. Furthermore, we have 
$\bm{\phi}_{\mathtt{A}} = [[i,t+1], [i-1,t], [i,t], [i-1,t]]$. Therefore, 
$dim(\mathtt{A}) = 2$, and 
$\bm{\phi}_{\mathtt{B}}: \mathbb{N}^2 \rightarrow 
\Big(\mathbb{N}^2\Big)^4$. We further have $\bm{\phi}_{\mathtt{B}} = [[i]]$, 
$dim(\mathtt{B}) = 1$, and  $\bm{\phi}_{\mathtt{A}}: \mathbb{N} \rightarrow 
\mathbb{N}$.
\linebreak
To evaluate $St$ for  $\bm{\psi} = [2,2]$, we need to access four   
	elements of \texttt{A} (three loads and one store), so its access set is 
	$\mathcal{A} 
	= \bm{\phi}_{\mathtt{A}}[\mathcal{H}] = \{[2,3], 
	[1,2], [2,2], [2,3]\}$. Furthermore, we have the access 
	subdomain $\bm{D}(\mathcal{A}) $
	$=\{2,3\} \times \{1,2,3\}$.
\end{eg}

\begin{defn}
	\textbf{Access offset set}. Given a simple overlap access
	$\bm{\phi} = [\bm{\phi}_{1}, \dots, \bm{\phi}_{n}]$ 
	consider its $n$ translation vectors $\bm{t}_{k} = 
	[t_{k}^1, \dots, t_{k}^{dim(A_j)}]$
	$ \in \mathbb{N}^{dim(A)}, k = 1, \dots, n$. For each 
	dimension 
	$i = 1,\dots, dim(A_j)$ we denote 
	$\hat{t}^i = \{t_{1}^i, \dots, t_{n}^i\} \setminus \{0\}$ 
	as the set of all unique non-zero $i$th coordinates 
	among all $n$ translation vectors.
\end{defn}

\begin{defn}
	\textbf{Rectangular subcomputation} For a given 
	subcomputation domain $\bm{D}$, a subcomputation 
	$\mathcal{H}$ is called \textbf{rectangular} if $\mathcal{H} 
	= 
	\bm{D}$ and 
	is denoted $\mathcal{H}_{rec}(\bm{D})$. The size of 
	rectangular 
	computation is $|\mathcal{H}_{rec}(\bm{D})| = 
	\prod_{t=1}^{\ell}|D^t|$.
	If it is clear from the context, we will denote 
	$\mathcal{H}_{rec}(\bm{D})$ simply as $\mathcal{H}_{rec}$.
\end{defn}

\begin{observation}
	\label{cor:simple_overlap}
	Consider a simple overlap access $\bm{\phi} = 
	[\bm{\phi}_1, \dots, \bm{\phi}_n]$ of array $A$ and a 
	rectangular subcomputation 
	$\mathcal{H}_{rec}(\bm{D})$.
	Then since all $\bm{\phi}_k$ are equal up to translation, 
	the 
	ranges of iteration variables they access are also equal 
	up 
	to the same translation: $\forall i = 1, \dots, dim(A) : 
	\forall j = 1, \dots, n: 
	\bm{\phi}_j[D^i] = \bm{\phi}_1[D^i] + t_j$, which 
	also implies that 
	$\forall i = 1, \dots, dim(A) : \forall j = 1, \dots, n: 
	|\bm{\phi}_j[D^i]| = |\bm{\phi}_1[D^i]|$.
\end{observation}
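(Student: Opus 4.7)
My plan is to give a direct proof by unpacking the two relevant definitions. The simple overlap condition in the SOAP definition already asserts the \emph{pointwise} identity $\bm{\phi}_k(\bm{\psi}) = \bm{\phi}_1(\bm{\psi}) + \bm{t}_k$ for every $\bm{\psi} \in \bm{\mathcal{D}}$, so the only work is to lift this from single iteration vectors to the entire range $D^i$ of the iteration variable $\psi^i$ inside $\mathcal{H}_{rec}(\bm{D})$. Rectangularity of $\mathcal{H}_{rec}$ is precisely what enables this lift: because $\bm{D} = D^1 \times \cdots \times D^\ell$, the variable $\psi^i$ independently sweeps $D^i$ regardless of the values chosen for the remaining iteration variables, and every $\bm{\psi}$ with $\psi^i \in D^i$ indeed appears in the subcomputation.

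First I would prove the set equality $\bm{\phi}_j[D^i] = \bm{\phi}_1[D^i] + \bm{t}_j$ by double inclusion. For the forward direction, take $a \in \bm{\phi}_j[D^i]$; by definition $a$ is the image under the appropriate coordinate of $\bm{\phi}_j$ of some iteration vector $\bm{\psi}$ with $\psi^i \in D^i$. Applying the pointwise translation identity gives $a = \bm{\phi}_1(\bm{\psi}) + \bm{t}_j$ in the same coordinate, hence $a - \bm{t}_j \in \bm{\phi}_1[D^i]$, so $a \in \bm{\phi}_1[D^i] + \bm{t}_j$. The reverse inclusion follows by the mirror argument with the roles of $\bm{\phi}_j$ and $\bm{\phi}_1$ swapped and $-\bm{t}_j$ in place of $\bm{t}_j$. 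At each step, rectangularity justifies the claim that whenever $\psi^i$ varies over $D^i$, the full iteration vector $\bm{\psi}$ remains in $\bm{D}$ and therefore in $\mathcal{H}_{rec}$.

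The cardinality claim $|\bm{\phi}_j[D^i]| = |\bm{\phi}_1[D^i]|$ then drops out as an immediate corollary: translation by a constant integer vector is a bijection on $\mathbb{Z}^{dim(A)}$, with inverse given by translation by $-\bm{t}_j$, and bijections preserve the cardinality of any set they are restricted to.

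There is no real obstacle here: the observation is essentially a bookkeeping restatement of the SOAP simple overlap axiom in the rectangular setting. The one fine point worth flagging is the indexing convention, namely whether $\bm{t}_j$ in the observation is to be read as the full translation vector or as its $i$-th scalar component $t_j^i$; either reading works and the argument is identical, since what is used is only that translation by a constant is a bijection that commutes with taking images of sets.
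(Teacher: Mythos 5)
Your proof is correct and follows essentially the same route as the paper, which justifies the observation inline by the single phrase ``since all $\bm{\phi}_k$ are equal up to translation'' and gives no further argument. Your double-inclusion elaboration, the appeal to rectangularity so that $\psi^i$ sweeps all of $D^i$, and the remark that translation is a bijection are exactly the details the paper leaves implicit; your note on the $t_j$ versus $t_j^i$ indexing is a fair observation about the paper's notation and does not affect the argument.
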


To bound the sizes of rectangular subcomputations, we use two lemmas given by 
Kwasniewski et al.~\cite{confluxArxiv}:

\begin{lma}
	\label{lma:vi_bound}
	(Lemma 4 in~\cite{confluxArxiv})
	For statement $St$, given $\bm{D}$, the size of 
	subcomputation $\mathcal{H}$ (number 
	of 
	vertices of $S$ computed during 
	$\mathcal{H}$) is bounded by the sizes of the iteration 
	variables' 
	sets $D^t, t = 
	1, \dots, \ell$:
			\vspace{-0.5em}
	\begin{equation}
	\label{eq:vmax_vol}
	|\mathcal{H}| \le \prod_{t=1}^{\ell}|D^t|.
	\end{equation}
			\vspace{-0.5em}
\end{lma}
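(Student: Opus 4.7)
The plan is to derive the inequality directly from the definitions of subcomputation, subcomputation domain, and rectangular size, so the proof will be essentially a containment followed by a counting step. I will not need any pebble-game machinery: everything I need has already been established in the CDAG construction for SOAP and in the definition of $\bm{D}(\mathcal{H})$.

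First, I would recall that under the SOAP construction every non-input vertex of $G$ is uniquely identified with an iteration vector $\bm{\psi} = [\psi^1,\dots,\psi^\ell] \in \mathbb{N}^\ell$, so $\mathcal{H}$ can be viewed as a set of iteration vectors. Then I would invoke the definition of the subcomputation domain: $D^t$ is precisely the set of values that the $t$th iteration variable takes across the vertices of $\mathcal{H}$, and hence every $\bm{\psi} \in \mathcal{H}$ satisfies $\psi^t \in D^t$ for each $t = 1, \dots, \ell$. This gives the containment
\[
\mathcal{H} \;\subseteq\; D^1 \times D^2 \times \cdots \times D^\ell \;=\; \bm{D}(\mathcal{H}).
\]

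From here the bound is immediate by a product-of-cardinalities argument: the Cartesian product on the right has size $\prod_{t=1}^{\ell}|D^t|$, and monotonicity of cardinality under inclusion yields $|\mathcal{H}| \le \prod_{t=1}^{\ell}|D^t|$, which is exactly \eqref{eq:vmax_vol}. I would also remark that equality is attained precisely when $\mathcal{H} = \bm{D}$, which matches the definition of a rectangular subcomputation $\mathcal{H}_{rec}(\bm{D})$ and its stated size; this observation is useful as it motivates why rectangular subcomputations are the worst case the bounding strategy needs to worry about.

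There is no substantive obstacle in this argument. The only subtlety I would flag carefully is the implicit use of injectivity of the association between non-input vertices and iteration vectors: if two distinct values of $\bm{\psi}$ could correspond to the same vertex of $\mathcal{H}$, the counting step would still go through (it only weakens an upper bound), but for the characterization of equality in the rectangular case I need the bijection that the SOAP CDAG construction guarantees. Once that is pointed out, the rest of the proof is a one-line containment and cardinality estimate.
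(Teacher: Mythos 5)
Your proof is correct and follows essentially the same route as the paper's: the paper also argues that each computation in $\mathcal{H}$ is uniquely determined by its iteration vector and counts the at most $\prod_{t=1}^{\ell}|D^t|$ ways to choose that vector, which is exactly your containment $\mathcal{H} \subseteq D^1 \times \cdots \times D^\ell$ phrased as a counting statement. Your explicit remarks on the equality case and on the role of injectivity are fine additions but do not change the substance of the argument.
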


\begin{proof}
	Inequality~\ref{eq:vmax_vol} follows from a combinatorial 
	argument: 
	each 
	computation in $\mathcal{H}$ is uniquely defined by its 
	iteration 
	vector $[\psi^1, 
	\dots, 
	\psi^\ell]$. As each iteration variable $\psi^t$ 
	takes $|D^t|$ 
	different values during $\mathcal{H}$, we have $|D^1|
	\cdot 
	\dots \cdot |D^t| 
	= \prod_{t=1}^{\ell}|D^t|$ ways how to uniquely choose 
	the iteration 
	vector in 
	$\mathcal{H}$. 
\end{proof}

\begin{lma}
	\label{lma:projection_bound}
		(Lemma 5 in~\cite{confluxArxiv})
	For the given access function $\bm{\phi} = [\bm{\phi}_1, 
	\dots, \bm{\phi}_n]$ accessing array $A$, 
	$A[\bm{\phi}(\bm{\psi})]$,
	the access set size of each of components 
	$|\bm{\phi}_k[\mathcal{H}]|$ during subcomputation 
	$\mathcal{H}$  is 
	bounded by the 
	sizes of $dim(\bm{\phi}_A)$
	iteration variables' sets $D^i, k = 
	1, \dots, dim(\bm{\phi}_j)$:
	
	\vspace{-0.5em}
	\begin{equation}
	\label{eq:a_j_vol}
	|\bm{\phi}_k[\mathcal{H}]| \le 
	\prod_{i=1}^{dim(\bm{\phi}_A)}|D^i|
	\end{equation}
			\vspace{-0.5em}
	
	where $D^i \ni \psi^i$ is the iteration domain of 
	variable $\psi^i	$ during $\mathcal{H}$.
\end{lma}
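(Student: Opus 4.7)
My plan is to mirror the combinatorial argument used in the proof of Lemma~\ref{lma:vi_bound}, but push it through the access function $\bm{\phi}_k$ instead of operating directly on the iteration vector. The two ingredients I will lean on are (i) the fact that the subcomputation domain $\bm{D}$ constrains each iteration variable $\psi^i$ to a set of at most $|D^i|$ distinct values, and (ii) the injectivity of every access function component $\bm{\phi}_k$, which is guaranteed by clause 5 of the SOAP definition.

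First I would unpack the access set: by definition $\bm{\phi}_k[\mathcal{H}] = \{\bm{\phi}_k(\bm{\psi}_j) : \bm{\psi} \in \mathcal{H}\}$, where $\bm{\psi}_j = [\psi_j^1, \dots, \psi_j^{dim(A)}]$ is the subvector of iteration variables that $\bm{\phi}_k$ actually reads (a projection of the full iteration vector $\bm{\psi}$). Next I would bound the number of distinct input tuples $\bm{\psi}_j$ that can appear during $\mathcal{H}$: since $\mathcal{H} \subseteq \bm{D}$ and $\bm{D}$ is a Cartesian product, each coordinate $\psi_j^i$ of $\bm{\psi}_j$ ranges over at most $|D^i|$ values, giving at most $\prod_{i=1}^{dim(A)} |D^i|$ distinct input tuples by exactly the same counting as in Lemma~\ref{lma:vi_bound}. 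Finally, I would invoke injectivity: because $\bm{\phi}_k$ is an injective map on its domain, the image $\bm{\phi}_k[\mathcal{H}]$ contains no more elements than the set of distinct inputs, and the claimed bound $|\bm{\phi}_k[\mathcal{H}]| \le \prod_{i=1}^{dim(A)} |D^i|$ follows immediately.

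The only genuinely delicate step — rather than a real obstacle — is the notational bookkeeping: carefully distinguishing the full iteration vector $\bm{\psi}$ (of length $\ell$) from the restricted input $\bm{\psi}_j$ (of length $dim(A)$) that parameterizes $\bm{\phi}_k$, and ensuring that the $D^i$ appearing in the product are the ranges of the correct iteration variables, namely those read by $\bm{\phi}_k$. Once the projection onto these coordinates is made explicit, the combinatorial counting and the injectivity argument compose in one line and no further work is needed.
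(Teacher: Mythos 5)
Your proposal is correct and follows essentially the same route as the paper's proof: bound the number of distinct iteration subvectors $[\psi_j^1,\dots,\psi_j^{dim(A)}]$ by the product $\prod_{i}|D^i|$ via the same counting as in Lemma~\ref{lma:vi_bound}, then use injectivity of $\bm{\phi}_k$ to conclude the image is no larger. Your explicit attention to the projection from the full iteration vector onto the coordinates read by $\bm{\phi}_k$ is exactly the bookkeeping the paper's proof relies on implicitly.
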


\begin{proof}
	We use the same 
	combinatorial argument as in Lemma~\ref{lma:vi_bound}. 
	Since access functions are injective,
	each vertex in $A$ accessed by $\bm{\phi_k}$ is uniquely 
	defined by $[\psi_k^1, 
	\dots, 
	\psi_k^{dim(A)}]$. Knowing the number of different 
	values each 
	$\psi_k^j$ 
	takes in $\mathcal{H}$, 
	we bound the number of different access vectors 
	$\bm{\phi}_k[\mathcal{H}]$.
\end{proof}

		\vspace{-0.5em}
\subsection{Bounding SOAP Access Size}
\label{sec:maxsubcomp}

Recall that our goal is to find the maximum size of the subcomputation given 
its dominator size. We 
first do the
converse: given the rectangular subcomputation $\mathcal{H}_{rec}$, we bound 
the 
minimum number of input vertices required to compute $\mathcal{H}_{rec}$. 
In Section~\ref{sec:compintensity} we prove that indeed $\mathcal{H}_{rec}$ is 
the subcomputation that upper-bounds the maximum 
computational intensity 
$\rho$.
Since arrays $A_1, \dots , A_m$ are disjoint, the total number of 
input vertices 
is the sum of their access set sizes: 
$|Dom_{min}(\mathcal{H}_{rec}) \ge \sum_{j=1}^{m}|\mathcal{A}_j|$. We now 
proceed to bound individual access set sizes $|\mathcal{A}_j|$.

Consider array $A$ with $dim(A) = d$ and its access function 
$\bm{\phi}(\bm{\psi}) = 
[\bm{\phi}_1(\bm{\psi}), \dots,$ $ 
\bm{\phi}_n(\bm{\psi})]$ that access $n$ elements from $A$
(to simplify the notation, we drop the subscript $j$, since we consider only 
one array).
Observe that during $\mathcal{H}_{rec}$, all combinations of iteration 
variables 
$\psi^1 \in D^1, \dots, \psi^\ell \in D^\ell$ are accessed, so 
$|\mathcal{H}_{rec}| = \prod_{t=1}^{\ell}|D^t|$ 
(Lemma~\ref{lma:vi_bound}). This also implies that 
each of $k = 1, \dots, n$ accesses to $A$ required
$|\bm{\phi}_k[\mathcal{H}_{rec}]| 
=\prod_{t=1}^{d}|D^t|$ 
vertices from $A$ (Lemma~\ref{lma:projection_bound} and 
Observation~\ref{cor:simple_overlap}). 
Therefore, the total number of accesses to array $A$ 
during $\mathcal{H}_{rec}$ is 
$|\mathcal{A}| 
\ge \prod_{t=1}^{d}|D^t|$.
However, the sets of vertices accessed by different 
$\bm{\phi}_k$ may overlap, that is, there may exist two 
accesses $\bm{\phi}_l$ and $\bm{\phi}_m$, for which 
$\bm{\phi}_l[\mathcal{H}_{rec}] \cap 
\bm{\phi}_m[\mathcal{H}_{rec}] 
\ne \emptyset$. Therefore, we also obtain the upper bound
$|\mathcal{A}| \le 
\sum_{j=1}^{n}\prod_{t=1}^{d}|D^t|$. We now want to 
narrow the gap between the upper and the lower bounds.

\begin{lma}
	\label{lma:rectTilingAPP}
	If a given input array $A$ with $dim(A) = d$ is accessed 
	by a simple overlap access 
	$\bm{\phi}(\bm{\psi}) = [\bm{\phi}_1(\bm{\psi}), \dots, 
	\bm{\phi}_n(\bm{\psi})]$,
	its access set size $|\mathcal{A}|$
	during rectangular computation $\mathcal{H}_{rec}(\bm{D})$ is 
	bounded by 
	
	\vspace{-0.5em}
	\begin{equation}
	\label{eq:aapAccessSizes}
	|\mathcal{A}| = |\bm{\phi}[\mathcal{H}_{rec}(\bm{D})]|  
	\ge 
	2\prod_{i=1}^{d}|D^i| -
	\prod_{i=1}^{d}(|D^i| - |\hat{t}^i|),
	\end{equation}
	where $|\hat{t}^i|$ is the size of the access offsets set in the $i$th  
	dimension.
\end{lma}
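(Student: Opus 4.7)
The plan is to prove this by induction on the array dimension $d=\dim(A)$, after reframing the claim in a convenient form. Write $V=\prod_i|D^i|$ and $P=\prod_i(|D^i|-|\hat{t}^i|)$, and let $B_1:=\bm{\phi}_1[\mathcal{H}_{\mathit{rec}}]$ be the reference translate (since $\bm{t}_1=\bm{0}$ by the SOAP definition, $B_1$ is the $\bm{D}$-box itself). Every other $\bm{\phi}_k[\mathcal{H}_{\mathit{rec}}]$ equals $B_1+\bm{t}_k$, so $|\mathcal{A}|=V+|\mathcal{A}\setminus B_1|$, and the claim becomes $|\mathcal{A}\setminus B_1|\ge V-P$. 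Crucially, $V-P$ equals the size of the ``outer shell'' $R=\{\bm{p}\in B_1:\exists\,i,\;p_i\ge|D^i|-|\hat{t}^i|\}$, i.e.\ $B_1$ with its inner sub-box of dimensions $|D^i|-|\hat{t}^i|$ removed.

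The base case $d=1$ is immediate: the access set is a union of intervals of length $|D^1|$ offset by $\{0\}\cup\hat{t}^1$, forming one interval of length $|D^1|+\max\hat{t}^1$; since $\hat{t}^1$ consists of $|\hat{t}^1|$ distinct positive integers, $\max\hat{t}^1\ge|\hat{t}^1|$, giving $|\mathcal{A}|\ge|D^1|+|\hat{t}^1|=2|D^1|-(|D^1|-|\hat{t}^1|)$. For the inductive step I would construct an explicit injection $\phi:R\hookrightarrow \mathcal{A}\setminus B_1$. Given $\bm{p}\in R$, pick the smallest $i_0$ with $p_{i_0}\ge|D^{i_0}|-|\hat{t}^{i_0}|$ and set $s=|D^{i_0}|-p_{i_0}\in[1,|\hat{t}^{i_0}|]$. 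Because $\hat{t}^{i_0}$ contains $|\hat{t}^{i_0}|$ distinct positive integers (so its maximum is at least $s$), there exists $\tau\in\hat{t}^{i_0}$ with $\tau\ge s$; select the smallest such $\tau$ and, by a canonical (e.g.\ lexicographic) rule, some $\bm{t}_k$ with $t^{i_0}_k=\tau$. Then $\phi(\bm{p}):=\bm{p}+\bm{t}_k\in B_k\subseteq \mathcal{A}$, and its $i_0$-th coordinate $p_{i_0}+\tau\ge|D^{i_0}|$ places it outside $B_1$.

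The main obstacle is establishing injectivity of $\phi$. Two distinct $\bm{p}\ne\bm{p}'$ may pick translations with differing nonzero components in dimensions other than $i_0$, which could a priori cause collisions, or could even make the image overflow $B_1$ first in a smaller coordinate than $i_0$, breaking naive recovery. I would address this by partitioning $R$ into strata $R_S$ indexed by the exact set $S\subseteq[d]$ of coordinates in which $\bm{p}$ lies in the shell, then designing the canonical selection rule so that $(i_0,\tau,\bm{t}_k)$, and hence $\bm{p}=\bm{q}-\bm{t}_k$, are uniquely recoverable from $\bm{q}=\phi(\bm{p})$ within each stratum. If the stratified injection does not settle degenerate cases cleanly, I would fall back on a slicing induction: project $U$ along one axis, apply the inductive hypothesis to each $(d-1)$-dimensional slice, and add the contributions from ``core'' layers $y\in[0,|D^d|-1]$ to those from ``extension'' layers $y\ge|D^d|$ (of which there are at least $\max\hat{t}^d\ge|\hat{t}^d|$) to reassemble the bound $2V-P$.
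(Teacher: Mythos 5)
Your reduction is sound and matches the geometry the paper works with: writing $V=\prod_i|D^i|$, $P=\prod_i(|D^i|-|\hat{t}^i|)$ and observing that the claim is equivalent to $|\mathcal{A}\setminus B_1|\ge V-P=|R|$, where $R$ is the outer shell of the reference box, is a correct and useful reformulation. The gap is in the step that carries all the content: the injection $R\hookrightarrow\mathcal{A}\setminus B_1$ as you specify it is not injective, and not merely for exotic reasons. Take $d=2$, $D^1=D^2=\{0,\dots,N-1\}$, $\bm{t}_2=(1,0)$, $\bm{t}_3=(2,0)$, so $\hat{t}^1=\{1,2\}$. The shell points $(N-1,0)$ and $(N-2,0)$ lie in the same stratum $S=\{1\}$ with the same $i_0=1$; your rule gives $s=1,\tau=1$ for the first and $s=2,\tau=2$ for the second, and both map to $(N,0)$. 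So stratifying by the set of shell coordinates cannot rescue this rule --- the collisions occur inside a single stratum. Nor does the obvious repair (sort $\hat{t}^{i_0}$ increasingly and assign the $j$-th offset to the $j$-th layer from the top) work: with $\hat{t}^{i_0}=\{2,3\}$ the layers $N-1$ and $N-2$ both land on $N+1$. The correct one-dimensional fact is not ``each shell point can be pushed out by its own offset'' but rather ``the $|\hat{t}^i|$ points $\max\{D^i\}+\tau$, $\tau\in\hat{t}^i$, are distinct and all lie outside $D^i$''; any injection argument has to be anchored at the extremal corner in this way rather than translating each shell point independently. Since you explicitly defer both the injectivity proof and the fallback slicing induction (which has its own complication: the set of translates contributing to a given slice varies with the slice, so the inductive hypothesis does not apply uniformly), the proposal is a plan whose central step is both unproven and, as written, false.

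For comparison, the paper proves the bound by a different route: it identifies each $\bm{\phi}_k[\mathcal{H}_{rec}]$ with a hyperrectangle $H_k$, shows per dimension that $|\hat{t}^i|$ distinct nonzero offsets force the union to extend at least $|\hat{t}^i|$ layers beyond $D^i$ (precisely the corner-anchored argument above), and then argues that the union-minimizing arrangement consists of two ``antipodal'' boxes $H_p$, $H_q$ offset by $|\hat{t}^i|$ in every dimension, with all other boxes absorbed into them, giving $|H_p\cup H_q|=2V-P$ directly. Your shell-counting target $V-P$ is exactly $|H_p\cup H_q|-|H_p|$, so the two approaches aim at the same quantity; the paper's extremal-arrangement argument sidesteps the need for an explicit injection, at the cost of asserting the structure of the minimizer. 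If you want to pursue your route, the viable version is a disjoint decomposition of $V-P=\sum_{\emptyset\ne S\subseteq[d]}\prod_{i\in S}|\hat{t}^i|\prod_{i\notin S}(|D^i|-|\hat{t}^i|)$, exhibiting for each $S$ that many points of $U\setminus B_1$ whose coordinates exceed $\max\{D^i\}$ exactly for $i\in S$ --- but that still requires the corner-anchored counting, not the per-point translation you propose.
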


	\vspace{-0.5em}
\begin{proof}
	W.l.o.g., consider the first access function component 
	$\bm{\phi}_1$ 
	and its $\prod_{t=1}^{d}|D^t|$ accessed 
	vertices $\bm{\phi}_1[\mathcal{H}_{rec}]$. We will lower bound the number 
	of 
	accesses to $A$ from remaining $\bm{\phi}_k, k = 2, 
	\dots, n$,  which \emph{do not} overlap with 
	$\bm{\phi}_1[\mathcal{H}_{rec}]$, that is
	$|\bigcup_{k=2}^n\bm{\phi}_k[\mathcal{H}_{rec}] \setminus 
	\bm{\phi}_1[\mathcal{H}_{rec}]|$.
		Since by construction of $\mathcal{H}_{rec}$, all 
	$\bm{\phi}_k[\mathcal{H}_{rec}]$ are Cartesian 
	products of iteration variables' ranges 
	$\bm{\phi}_k[D^1] \times, \cdots, $  $\times 
	\bm{\phi}_k[D^d]$, there is a 
	bijection between $\bm{\phi}_k[\mathcal{H}_{rec}]$ and an 
	$d$-dimensional hyperrectangle $H_k \in \mathbb{N}^d$. 
	To secure correctness of our lower bound on 
	$|\mathcal{A}|$, we need to find the volume of the 
	smallest union of these hyperrectangles.
	
 	Note that $|\hat{t}^i|$ is a lower bound on the maximum offset between 
 	any two $H_j \ne H_k$  in dimension $i$: the union of all hyperrectangles 
 	$\bigcup_{k=1}^n H_k$
 	``stretches'' at least $|D^i| + |\hat{t}^i|$ elements in the $i$th 
 	dimension 
 	for all $i = 1, \dots, d$ 
 	(see Figure~\ref{fig:access_bound}).
 	To see this, observe that since $D^i \subset N$, for each element in 
	the access offset set $t_j^i \in \hat{t}^i$ 
	there is at least one element in $D^i + t_{j}^i$ 
	that is not in $D^i$, which implies that $|(D^i + 
	t_{j}^i) \setminus D^i| \ge 1$.
	Since $D^i$ is finite, there is a single well-defined maximum and 
		a minimum element, which implies that $(\max\{D^i\} + t_{j}^i 
		\notin D^i) \lor (\min\{D^i\} + t_{j}^i \notin D^i)$. 
	Also, because by definition of $\hat{t}^i$ we have 
	$\forall t_j^i, t_k^i \in \hat{t}^i: t_j^i \ne t_k^i$, 
	then we also have that each $t_j^i$ accesses at least one 
	``non-overlapping'' element independent of any other 
	$t_k^i$, that is $\forall t_j^i, t_k^i \in \hat{t}^i: 
	\max\{D^i\} + t_j^i \ne \max\{D^i\} + t_k^i$.	
	
	\begin{figure}
			\vspace{-0em}
		\includegraphics[width=\columnwidth]{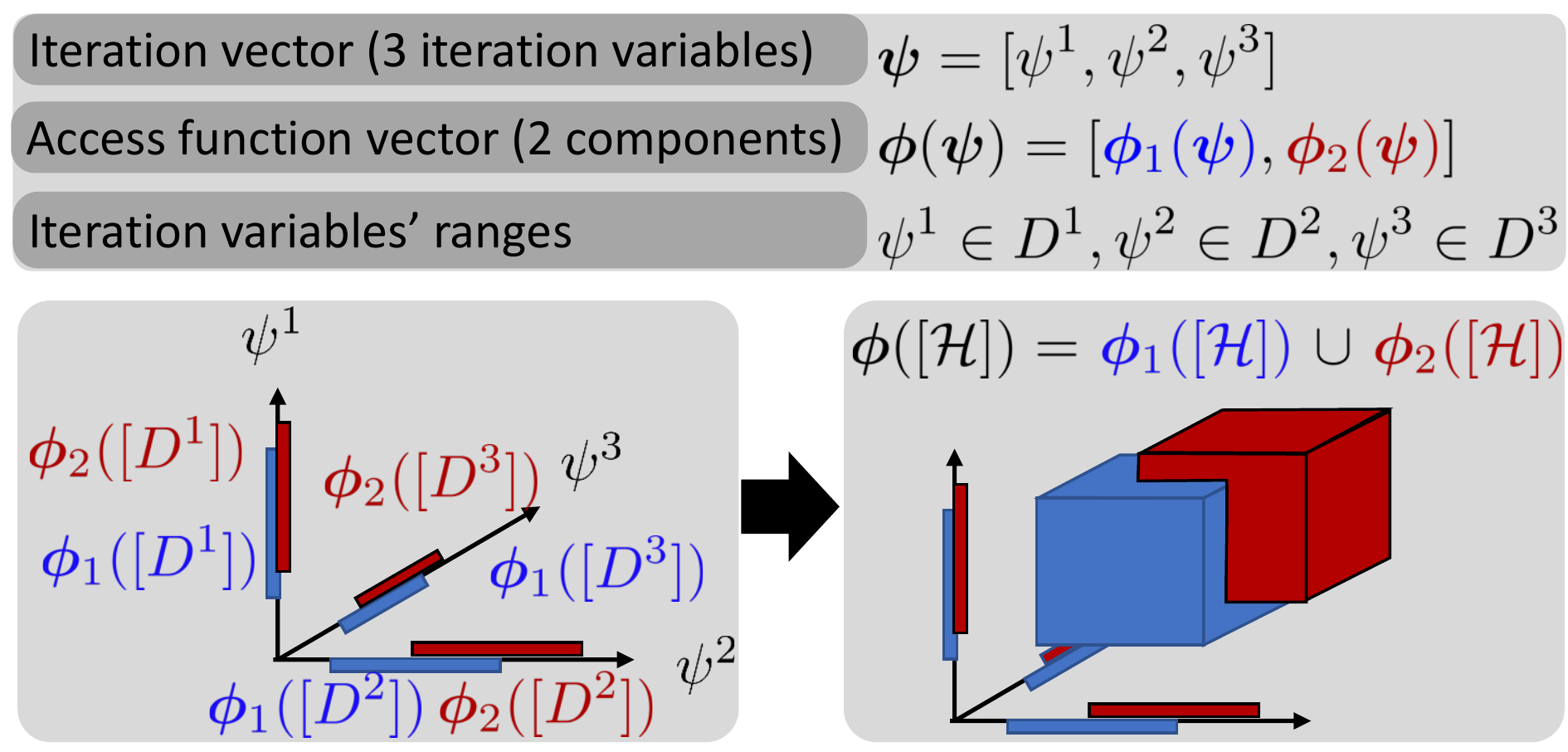}
			\vspace{-1em}
		\caption{Intuition behind Lemma~\ref{lma:rectTilingAPP}. Access sets 
		$\bm{\phi}_[\mathcal{H}_{rec}(\bm{D})]$ as 
			3-dimensional hyperrectangles. The union 
			$|\bigcup_{k=1}^n\bm{\phi}_k[\mathcal{H}_{rec}]|$ (and 
			therefore, the 
			total number of accesses 
			$|\mathcal{A}|$ ) is minimized when the hyperrectangles 
			are 
			placed in 
			two antipodal locations of the subcomputation domain
			$\mathcal{D}$.}
			\vspace{-1em}
		\label{fig:access_bound}
	\end{figure}
	
The arrangement of hyperrectangles $H_k, k = 1, \dots, n$ in a
$\mathbb{N}^d$ lattice s.t., their bounding box is  $\bm{D} = (|D^1| + 
|\hat{t}^1|) \times \dots \times (|D^d| + |\hat{t}^d|)$, 
which minimizes the size of their union $|\bigcup_k H_k|$ satisfies two 
properties:
\begin{enumerate}
	\item  there exist two ``extreme'' $H_p$, $H_q$, such that $H_q = H_p + 
	\bm{v}$, $\bm{u} = \mathbb{Z}^d, \forall_{i = 1, \dots, d}: |v^i| = 
	|\hat{t}^i|$,
\item all the remaining $H_k, k \ne p, q$ perfectly overlap with the 
``extreme'' 
hyperrectangles $H_k \subseteq H_p \cup H_q$.
\end{enumerate}

To see this, observe that for every non-zero $|\hat{t}^i|$ we need two 
hyperrectangles $H_p^i \ne H_q^i$, s.t., $H_q^i = H_p^i + [\cdot, \dots, 
|\hat{t}^i|, 
\dots, \cdot]$, that is, $H_q^i$ is offset from $H_p^i$ by  $|\hat{t}^i|$ in 
$i$th 
dimension. We therefore have $\bigcup_{i, |\hat{t}^i| > 0} (H_p^i \cup H_q^i) 
\subseteq   \bigcup_k H_k$. Since $H_p^i$ and $H_q^i$ are pairwise non-equal, 
but there are no restrictions on $H_p^i, H_p^j, i \ne j$, we have that the union
$\bigcup_{i, |\hat{t}^i| > 0} (H_p^i \cup H_q^i)$ is minimized if $\forall_{i 
\ne j} H_p^i = H_p^j$.

	Finally, observe the volume of $|\bigcup_{k=1}^n H_k|$ s.t. to the claimed 
	arrangement is:

	\vspace{-0.5em}
	\begin{equation}
	\label{eq:union_bound}
	\Big|\bigcup_{k=1}^n H_k\Big| = |H_p \cup H_q| =
	2\prod_{i=1}^{d}|D^i| -
	\prod_{i=1}^{d}(|D^i| - |\hat{t}^i|)
	\end{equation}
	\vspace{-0.5em}

	It shows that for any set of $n$ hyperrectangles 
	s.t. the given constraint, the volume of their union is no 
	smaller than the one in Equation~\ref{eq:union_bound}. 
	Since the offset constraint is also a lower bound on the 
	number of non-overlapping accesses in each dimension, it 
	also forms the bound on 
	$|\bigcup_{k=1}^n\bm{\phi}_k[\mathcal{H}_{rec}]| = 
	|\bm{\phi}[\mathcal{H}_{rec}] |= |\mathcal{A}|$.
\end{proof}

\vspace{-1.0em}
\subsection{Input-Output Simple Overlap}
\label{sec:input-output}
If one of the input arrays $A_i, i 
\ge 
1$, is also the output array $A_0$, then their access function vectors 
$\bm{\phi}_0$ and $\bm{\phi}_i$ form together a simple overlap access 
(Section~\ref{sec:soap}).
In such cases, some vertices 
accessed by 
$\bm{\phi}_i$ during $\mathcal{H}_{rec}$ may be computed and do not 
need to be loaded. We formalize it in the following corollary, which follows 
directly from Lemma~\ref{lma:rectTilingAPP}:

\begin{crl}
\label{crl:inputOutput}
Consider statement $St$ that computes array $A, dim(A) = d$ 
and 
simultaneously 
accesses it as an input $A[\bm{\phi}_0(\bm{\psi})] = 
f(A[\bm{\phi}_1(\bm{\psi})])$. If $\bm{\phi}_0 \cup 
\bm{\phi}_1$ is a simple overlap access,
the access set size $|\mathcal{A}|$
during rectangular computation $\mathcal{H}_{rec}$ is 
bounded by 
		\vspace{-0.5em}
\begin{equation}
\label{eq:aapAccessSizesWithOutput}
|\mathcal{A}|
\ge 
\prod_{i=1}^{d}|D^i| -
\prod_{i=1}^{d}(|D^i| - |\hat{t}^i|),
\end{equation}
		\vspace{-0.5em}
where $\hat{t}$ is an access offset offset set of $\bm{\phi}_0 \cup 
\bm{\phi}_1$.
\end{crl}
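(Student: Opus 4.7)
The plan is to apply Lemma~\ref{lma:rectTilingAPP} directly to the combined simple overlap access $\bm{\phi}_0 \cup \bm{\phi}_1$, and then to subtract off the contribution of those vertices that are \emph{computed} within $\mathcal{H}_{rec}$ rather than loaded from outside---namely the output vertices $\bm{\phi}_0[\mathcal{H}_{rec}]$. This turns the extra $\prod |D^i|$ summand that Lemma~\ref{lma:rectTilingAPP} carries for a pure-input array into exactly the saving claimed by the corollary.

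First I would observe that by hypothesis $\bm{\phi}_0 \cup \bm{\phi}_1$ is itself a simple overlap access with access offset set $\hat{t}$, so Lemma~\ref{lma:rectTilingAPP} applies verbatim to it and yields a lower bound on the total number of distinct $A$-vertices touched during $\mathcal{H}_{rec}$:
\begin{equation*}
\bigl|\bm{\phi}_0[\mathcal{H}_{rec}] \cup \bm{\phi}_1[\mathcal{H}_{rec}]\bigr| \ge 2\prod_{i=1}^{d}|D^i| - \prod_{i=1}^{d}(|D^i| - |\hat{t}^i|).
\end{equation*}
Next, since $\bm{\phi}_0$ is injective (each statement execution writes one fresh output vertex), Lemma~\ref{lma:projection_bound} together with Observation~\ref{cor:simple_overlap} gives the exact count $|\bm{\phi}_0[\mathcal{H}_{rec}]| = \prod_{i=1}^{d}|D^i|$. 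These $\prod |D^i|$ vertices are produced by executions inside $\mathcal{H}_{rec}$ and therefore do not need to be loaded. The $A$-vertices that must be supplied from outside the subcomputation are thus exactly $\bm{\phi}_1[\mathcal{H}_{rec}] \setminus \bm{\phi}_0[\mathcal{H}_{rec}]$, whose cardinality equals $|\bm{\phi}_0[\mathcal{H}_{rec}] \cup \bm{\phi}_1[\mathcal{H}_{rec}]| - |\bm{\phi}_0[\mathcal{H}_{rec}]|$; substituting the two bounds above gives the claimed inequality.

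The only subtle step, and the one I would be most careful about, is justifying the subtraction: I need to argue that each output vertex in $\bm{\phi}_0[\mathcal{H}_{rec}]$ can genuinely replace a coinciding read by $\bm{\phi}_1$ without contributing a load. This rests on two facts. First, because $\bm{\phi}_0 \cup \bm{\phi}_1$ is a simple overlap access, read/write coincidences within a rectangular iteration block are dictated by a fixed, constant-stride stencil of translation vectors, and every shared vertex is the image of $\bm{\phi}_0$ at a unique iteration vector (by injectivity). Second, the acyclicity condition built into the \xparting definition ensures that such a stencil admits a topological order in which every overlapping write precedes the matching read, so the read can be served by the red pebble left by the write. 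Once this scheduling argument is in place, the remainder is a purely combinatorial manipulation of the bound of Lemma~\ref{lma:rectTilingAPP}, exactly as the corollary's statement advertises.
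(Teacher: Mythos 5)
Your argument is essentially the paper's own proof: apply Lemma~\ref{lma:rectTilingAPP} to the combined simple overlap access $\bm{\phi}_0 \cup \bm{\phi}_1$ and subtract the at most $\prod_{i=1}^{d}|D^i|$ vertices that can be computed (and hence need not be loaded) inside $\mathcal{H}_{rec}$. The scheduling discussion in your final paragraph is not needed for the lower-bound direction --- one only needs that at most $\prod_{i=1}^{d}|D^i|$ loads \emph{can} be avoided, not that a valid schedule actually realizes every such saving.
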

\begin{proof}	
	This result follows directly from Lemma~\ref{lma:rectTilingAPP}. Since 
	there are at least $2\prod_{i=1}^{d}|D^i| -
	\prod_{i=1}^{d}(|D^i| - |\hat{t}^i|)$ vertices accessed from $A_i$, and at 
	most $\prod_{i=1}^{d}|D^i|$ of them can be computed during 
	$\mathcal{H}_{rec}$ (Lemma~\ref{lma:projection_bound}) and therefore, do 
	not have to be loaded, then at least $2\prod_{i=1}^{d}|D^i| -
\prod_{i=1}^{d}(|D^i| - |\hat{t}^i|) - \prod_{i=1}^{d}|D^i|$ elements have to 
be accessed from the outside of $\mathcal{H}_{rec}$.
\end{proof}

\subsection{Bounding Maximal Subcomputation}
\label{sec:compintensity}

In Section~\ref{sec:maxsubcomp} we lower-bounded the dominator set size of the 
rectangular subcomputation $|Dom_{min}(\mathcal{H}_{rec})| = 
\sum_{j=1}^{m}|\mathcal{A}_j|$ by bounding the sizes of simple overlap access 
sets sizes $|\mathcal{A}_j|$ (Lemma~\ref{lma:rectTilingAPP}).
Recall that to bound the I/O lower bound we need the size $\chi(X)$ of the 
\emph{maximal} subcomputation $\mathcal{H}_{max}$ for given value of $X$ 
(Inequality~\ref{eq:general_lower_bound}). We now prove that 
$\mathcal{H}_{rec}$ upper-bounds the size of $\mathcal{H}_{max}$.

Given $\mathcal{H}$, denote the the ratio of the size of the subcomputation to 
the dominator set size $\delta(\mathcal{H}) = 
\frac{|\mathcal{H}|}{\sum_{j=1}^{m}|\bm{\phi}_j[\mathcal{H}]|}$. By definition, 
$\mathcal{H}_{max}$ maximizes $\delta$ among all valid $\mathcal{H} \in 
\mathcal{P}$.
 We need to 
show that for a fixed subcomputation domain $\bm{D}_0$, among all 
subcomputations for which $\bm{D}(\mathcal{H}) = \bm{D}_0$, the rectangular 
subcomputation $\mathcal{H}_{rec}(\bm{D}_0)$ upper-bounds $\delta$. 
\emph{Note that an \xpart derived from the optimal pebbling schedule $P_{opt}$ 
may not include $\mathcal{H}_{rec}$}. However, $\forall X: \chi_{rec}(X) \ge 
\chi(X)$, that is, given $X$, the size of $\mathcal{H}_{rec}$ s.t., 
$\sum_{j=1}^{m}|\bm{\phi}_j[\mathcal{H}_{rec}]| = X$ will always be no smaller 
than the size of $\mathcal{H}_{max}$. To show this, we first need to introduce 
some auxiliary definitions.

\textbf{Iteration variables, their indices, and their values. }
To simplify the notation, throughout the paper we used the iteration variables 
$\psi^i$ and 
the \emph{values} they take for some iteration interchangeably. However, now we 
need to make this distinction explicit.
The iteration vector consists of $\ell$ iteration variables $\bm{\psi} = 
[\psi^1, 
\dots,$  $\psi^\ell]$. Each access function $\bm{\phi}_j$ is 
defined on $dim(A_j)$ $\le \ell$ of them. Recall that $\bm{\psi}_j$ is the set 
of 
iteration variables accessed by $\bm{\phi}_j$ (Section~\ref{sec:soap}, property 
(5)). 
To keep track of the indices of particular iteration variables, denote 
$\bm{\Psi} = [\ell] = 
\{1, \dots, \ell\} \subset \mathbb{N}$,  $\bm{\Psi}_j \subseteq \bm{\Psi}$, and 
$\bm{\Psi}'_j = 
\bm{\Psi} \setminus \bm{\Psi}_j$ as the sets of integers. If $i \in 
\bm{\Psi}_j$, then the 
$i$th iteration variable $\psi^i$ is accessed by the access function 
$\bm{\phi}_j$. We further define $\bm{\psi}^* \in \mathbb{N}^\ell$ as a 
specific 
\emph{value} of the iteration vector $\bm{\psi}$ that 
uniquely defines a 
single non-input vertex. We analogously define 
$\bm{\psi}_j^*$, $\psi^{i,*}$, and $\psi_j^{i,*}$ (the last one being a value 
of 
$i$th iteration variable of the $j$th access).
We also define $\theta(\bm{\psi}_j^*, \mathcal{H})$ as the number of vertices 
in $\mathcal{H}$ that have all their $\bm{\Psi}_j$ 
coordinates equal to  
$\bm{\psi}_j^*$, that is $\theta(\bm{\psi}_j^*, \mathcal{H}) = |\{\bm{\psi}^* : 
\bm{\psi}^* \in \mathcal{H} \land (\forall i \in \bm{\Psi}_j: \psi^{i,*} = 
\psi_j^{i,*} )\}|$.

We now formalize our claim in the following lemma:

\begin{lma}
	\label{lma:rhoAPP}
	Given the subcomputation domain $\bm{D}_0$, 
	$\mathcal{H}_{rec}(\bm{D}_0)$ maximizes $\delta(\mathcal{H})$ for all 
	$\mathcal{H}$ s.t. $\bm{D}(\mathcal{H}) = \bm{D}_0$.
	\begin{equation}
	\label{eq:rhoAPP}
	\forall \mathcal{H}:  
	\delta(\mathcal{H}) \le 	\delta(\mathcal{H}_{rec})
	\end{equation}
\end{lma}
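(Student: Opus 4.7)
The strategy is to reduce the claim to a per-access inequality: for each $j \in \{1, \dots, m\}$ separately,
$$|\bm{\phi}_j[\mathcal{H}]| \cdot |\mathcal{H}_{rec}| \;\ge\; |\bm{\phi}_j[\mathcal{H}_{rec}]| \cdot |\mathcal{H}|. \quad (\star)$$
Summing $(\star)$ over $j$ and rearranging gives $\delta(\mathcal{H}) \le \delta(\mathcal{H}_{rec})$ immediately. I would establish $(\star)$ via two intermediate facts, then combine them.

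First, a projection bound. For each $j$, define $V_j = \{\bm{\psi}_j^* : \bm{\psi}^* \in \mathcal{H}\}$, the projection of $\mathcal{H}$ onto the coordinates $\bm{\Psi}_j$ actually read by $\bm{\phi}_j$. Since $\bm{\phi}_j$ only touches those coordinates, $|\bm{\phi}_j[\mathcal{H}]| = |\bm{\phi}_j[V_j]|$. Partition $\mathcal{H}$ by the equivalence ``agrees on all $\bm{\Psi}_j$-coordinates'': each class is indexed by an element of $V_j$ and has size $\theta(\bm{\psi}_j^*, \mathcal{H}) \le \prod_{i \in \bm{\Psi}'_j} |D^i|$ (apply Lemma~\ref{lma:vi_bound} in the complementary coordinates). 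Summing,
$$|\mathcal{H}| \;\le\; |V_j| \cdot \prod_{i \in \bm{\Psi}'_j} |D^i| \;=\; |V_j| \cdot \frac{|\mathcal{H}_{rec}|}{\prod_{i \in \bm{\Psi}_j} |D^i|}.$$

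Second, a combinatorial inequality for the simple-overlap access applied to non-rectangular subsets: for any $V \subseteq \mathcal{R}_j := \prod_{i \in \bm{\Psi}_j} D^i$ whose axis projections coincide with those of $\mathcal{R}_j$ (forced by the assumption $\bm{D}(\mathcal{H}) = \bm{D}_0$),
$$\frac{|\bm{\phi}_j[V]|}{|V|} \;\ge\; \frac{|\bm{\phi}_j[\mathcal{R}_j]|}{|\mathcal{R}_j|}.$$
Intuitively, the rectangular set is the most ``access-efficient'' shape, achieving the lowest access size per vertex. I would prove this by extending the hyperrectangle-union argument of Lemma~\ref{lma:rectTilingAPP}: for each dimension $i \in \bm{\Psi}_j$ and each non-zero offset $t \in \hat{t}_j^i$, decompose $V$ into fibers along the $i$th axis; the new elements contributed by the translation in the $i$th coordinate are exactly the right endpoints of maximal runs in each fiber. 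Since every fiber has length at most $|D^i|$, each fiber contributes at least a $1/|D^i|$ fraction of its own mass in new endpoints, with equality iff the fiber equals $D^i$ itself. Combining this per-fiber bound across dimensions via the same antipodal-placement argument used in Lemma~\ref{lma:rectTilingAPP} yields the stated ratio inequality with equality precisely when $V = \mathcal{R}_j$.

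Chaining the two facts gives
$$|\bm{\phi}_j[\mathcal{H}]| \;=\; |\bm{\phi}_j[V_j]| \;\ge\; \frac{|\bm{\phi}_j[\mathcal{R}_j]|}{|\mathcal{R}_j|} \cdot |V_j| \;\ge\; \frac{|\bm{\phi}_j[\mathcal{H}_{rec}]|}{|\mathcal{H}_{rec}|} \cdot |\mathcal{H}|,$$
which is $(\star)$. The main obstacle is the second step: the projection inequality for non-rectangular $V$, since the access set is then a union of translated copies of a non-hyperrectangular shape and Lemma~\ref{lma:rectTilingAPP} does not apply directly. The fiber decomposition reduces this multi-dimensional union bound to a one-dimensional run-counting argument, which is the essential new combinatorial ingredient; the rest of the proof is bookkeeping that recycles results already established in Section~\ref{sec:single_statement_definitions}.
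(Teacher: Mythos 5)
Your overall skeleton matches the paper's proof: the reduction of $\delta(\mathcal{H})\le\delta(\mathcal{H}_{rec})$ to one inequality per access function $\bm{\phi}_j$, and your first fact (partitioning $\mathcal{H}$ into classes of size $\theta(\bm{\psi}_j^*,\mathcal{H})\le\prod_{i\in\bm{\Psi}'_j}|D^i|$) are precisely the paper's decomposition into the ratios $\delta_j^{-1}$ and its monotonicity-in-$\theta$ step. The gap is your second fact, which you correctly flag as the main obstacle but only sketch --- and which is false as stated. Take $d=2$, $D^1=D^2=\{1,\dots,N\}$, and the simple overlap access $\bm{\phi}=[\bm{\phi}_1,\bm{\phi}_1+(1,1)]$; the translation vector $\bm{t}_2=(1,1)$ is legal under the SOAP definition and gives $|\hat{t}^1|=|\hat{t}^2|=1$. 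Let $V=\{(i,j):|i-j|\le 1\}$ be a diagonal band: its axis projections are all of $D^1$ and $D^2$, $|V|=3N-2$, and $|\bm{\phi}[V]|=|V|+3$, since only the three cells at the far corner of the band are new under the shift. Meanwhile $|\bm{\phi}[\mathcal{R}]|/|\mathcal{R}|=(2N^2-(N-1)^2)/N^2=(N^2+2N-1)/N^2$. For $N=10$ this reads $31/28<119/100$, so the rectangle is \emph{not} the most access-efficient shape once a translation vector moves diagonally, and your chain $(\star)$ breaks exactly at the second inequality.

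Your fiber argument cannot close this: slicing $V$ into fibers parallel to a single offset $\bm{t}_k$ and counting right endpoints of maximal runs yields only $|(V+\bm{t}_k)\setminus V|\ge |V|/L$ with $L$ the maximum fiber length, i.e., a ratio of roughly $1+1/\min_i|D^i|$ per offset direction, whereas matching $|\bm{\phi}[\mathcal{R}]|/|\mathcal{R}|=2-\prod_i(1-|\hat{t}^i|/|D^i|)\approx 1+\sum_i|\hat{t}^i|/|D^i|$ requires the contributions of distinct dimensions to be near-additive for \emph{arbitrary} $V$; the band shows they are not when one offset couples two coordinates. The antipodal-placement argument of Lemma~\ref{lma:rectTilingAPP} concerns unions of translated \emph{hyperrectangles} and does not transfer to translated copies of an arbitrary $V$. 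Note that the paper's own proof does not establish your second fact either: it saturates the free coordinates $\bm{\Psi}'_j$ and then asserts that growing $\mathcal{H}$ along the dependent coordinates keeps $\delta_j^{-1}$ constant, without comparing the rectangle against an arbitrary projection $V_j$ sharing its bounding box. To make your route rigorous you would need either to restrict attention to translation vectors with a single non-zero coordinate (where the per-dimension endpoint counts land in disjoint regions and do add up), or to supply a genuinely new argument for offsets with several non-zero coordinates.
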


\begin{proof}
Instead of maximizing $\delta(\mathcal{H})$, we will minimize 
$\delta^{-1}(\mathcal{H}) = 
(\sum_{j=1}^{m}|\bm{\phi}_j[\mathcal{H}]|)/|\mathcal{H}| = 
\sum_{j=1}^{m}|\bm{\phi}_j[\mathcal{H}]|/|\mathcal{H}|$ over all 
possible $\mathcal{H}$. Observe 
that 
$\delta^{-1}(\mathcal{H})$ is linear w.r.t. the ratios of individual access 
function sets sizes $|\bm{\phi}_j[\mathcal{H}]|$ and the size of 
subcomputation 
$|\mathcal{H}|$. Therefore, we can examine each access 
$\bm{\phi}_j[\mathcal{H}]$ 
separately and show that every 
$\delta^{-1}_j = |\bm{\phi}_j[\mathcal{H}]|/|\mathcal{H}|$ 
is minimized for $\mathcal{H} = \mathcal{H}_{rec}$.	Then, if 
$\mathcal{H}_{rec}$ minimizes each of $\delta^{-1}_j$, then $\delta^{-1} = 
\sum_{j=1}^m \delta^{-1}_j$ is minimized, so indeed $\mathcal{H}_{rec}$ 
maximizes the ratio of the subcomputation size to the dominator set size.

Observe now, that for any $\mathcal{H}$ we have that $\forall_j: 
\delta^{-1}_j$ 
is monotonically decreasing w.r.t. $\theta(\bm{\psi}_j^*, \mathcal{H})$ for all 
$\bm{\psi}_j^* \in \bm{\phi}_j[\mathcal{H}]$. That is - pick any input vertex 
$\bm{\psi}_j^*$ from the set of vertices accessed by 
$\bm{\phi}_j[\mathcal{H}]$. Adding compute vertices $\bm{\psi}^*$ to 
$\mathcal{H}$ that access $\bm{\psi}_j^*$ do not increase the 
access set size  
$\bm{\phi}_j[\mathcal{H}]$, since $\bm{\psi}_j^*$ is already accessed. However, 
it increases the size of $\mathcal{H}$. Clearly, 
$\delta^{-1}_j$ reaches its minimum if $\forall 
\bm{\psi}_j^* \in \bm{\phi}_j[\mathcal{H}] :  \theta(\bm{\psi}_j^*, 
\mathcal{H}) = \prod_{i \in \bm{\Psi}'_j}|D^i|$, that is, $\mathcal{H}$ 
computes all vertices spanned by the access set $\bm{\phi}_j[\mathcal{H}]$ and 
all elements in the Cartesian product of ``free'' (independent of the access 
function $\phi_j$) iteration domains $D^i, i 
\in \bm{\Psi}'_j$.

We showed that for all $j$, given its initial access set 
$\bm{\phi}_j[\mathcal{H}]$, the ratio 
$\delta^{-1}_j$ is minimized for the
``almost-rectangular'' subcomputation, that is, $\mathcal{H}$ which computes 
all vertices $\bm{\psi}^* \in \bm{\phi}_j[\mathcal{H}] \times \prod_{i \in 
\bm{\Psi}'_j}D^i$. 
We now need to show that also extending $\mathcal{H}$ over the ``dependent'' 
ranges $\bm{\Psi}_j$ won't increase the ratio $\delta^{-1}$. When the 
access set size $\bm{\phi}_j[\mathcal{H}]$ increases by a factor $x$, 
$\mathcal{H}$ increases proportionally by $x$ too, keeping the 
ratio constant (See Figure~\ref{fig:rect_subcomp} for an example for 
$\ell = 3$).

\begin{figure}
	\includegraphics[width=1\columnwidth]{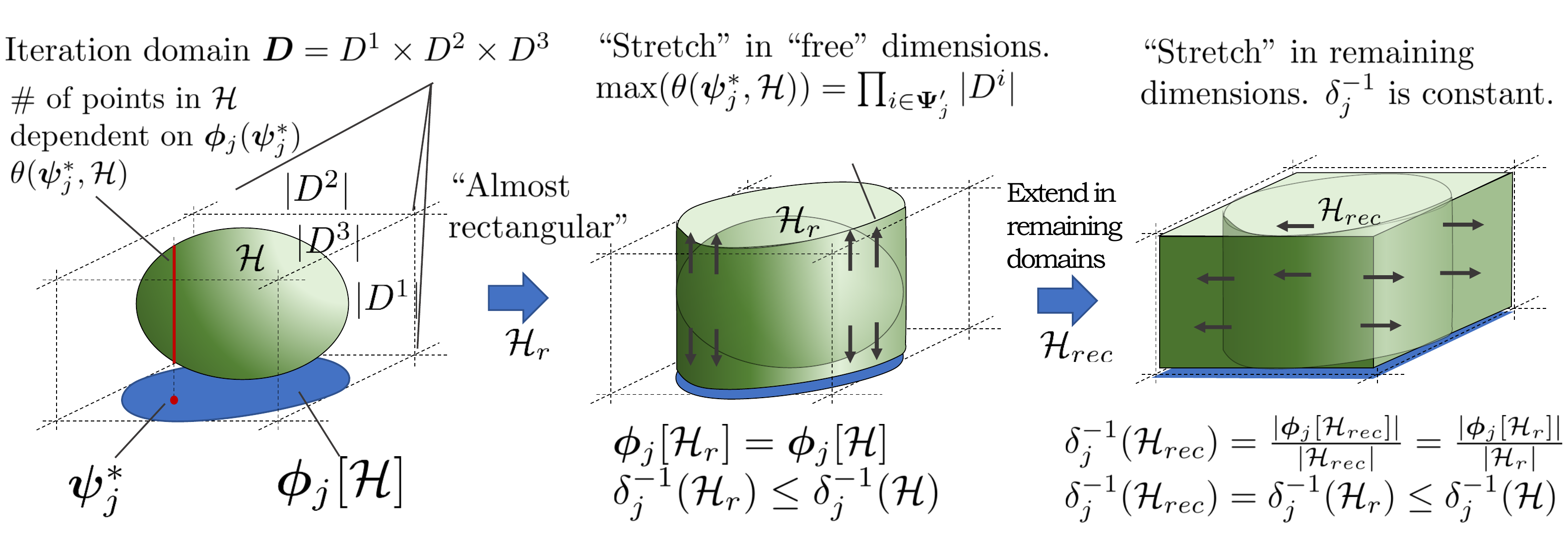}
	\vspace{-1em}
	\caption{Intuition behind Lemma~\ref{lma:rhoAPP}: extending the 
	subcomputation in the free dimensions 
	w.r.t $\bm{\phi}_j$ does not increase 
	$|\bm{\phi}_j[\mathcal{H}]|$. Once the subcomputation is almost 
	rectangular, extending $H$ in the remaining dimensions keeps the 
	ratio $\delta_j^{-1}$ constant.}
\vspace{-1em}
	\label{fig:rect_subcomp}
\end{figure}

Since our goal is to minimize each 
$\delta^{-1}_j$ \emph{separately}, independently of other $\delta^{-1}_i, i\ne 
j$, assume 
that we have already extended $\mathcal{H}$ to the ``almost-rectangular'' 
subcomputation, that is, all combinations of $\prod_{i \in \bm{\Psi}'_j}D^i$ 
were accessed in $\mathcal{H}$.
Observe now that $\theta(\bm{\psi}_j^*, \mathcal{H}) = \prod_{i \in 
\bm{\Psi}'_j}|D^i|$ for \emph{any} vertex $\bm{\psi}_j^*$. Therefore, since 
$|\mathcal{H}| = \sum_{\bm{\psi}_j^* \in \bm{\phi}_j[\mathcal{H}]} \prod_{i \in 
\bm{\Psi}'_j}|D^i|$, we see that $\delta^{-1}_j$ is \emph{constant} w.r.t., the 
size of the access set: $\delta^{-1}_j = 
\frac{|\bm{\phi}_j[\mathcal{H}]|}{|\mathcal{H}|} = \frac{1}{\prod_{i \in 
\bm{\Psi}'_j}|D^i|}$. Therefore, we can safely maximize 
$\bm{\phi}_j[\mathcal{H}]$ to the entire access set of the rectangular 
subcomputation $\mathcal{H}_{rec}$ without increasing $\delta_j^{-1}$. We 
conclude that for every access function 
$\bm{\phi}_j$ and every iteration variable index $i$, evaluating all vertices 
$\bm{\psi}*$ s.t. $\psi^i$ iterates over the entire domain $D^i$ minimizes 
$\delta_j^{-1}$.

\end{proof}
	
\subsection{I/O Lower Bounds and Optimal Tiling}
\label{sec:final_bound}
We now proceed to the final step of finding the I/O lower bound. Recall from 
Section~\ref{sec:intro_lowbound}, that the last missing piece is $\chi(X)$; 
that is, we seek to express $|\mathcal{H}_{max}(\bm{D})| = 
\prod_{t=1}^\ell|D^t|$ 
as a function of $X$. Observe that by Lemma~\ref{lma:rhoAPP}, 
$|Dom_{min}(\mathcal{H}_{max}(\bm{D})) \ge \sum_{j = 
	1}^{m}(2\prod_{i=1}^{dim(A_j)}|D_j^i| -
\prod_{i=1}^{dim(A_j)}(|D_j^i| - |\hat{t}_j^i|)$. On the other hand, 
by definition of \linebreak \xparting, 
$|Dom_{min}(\mathcal{H}_{max}(\bm{D}))| \le X$. 
Combining these inequalities, we solve for all $|D^t|$ as functions of $X$ by 
formulating it as 
the 
	optimization problem (see Section 3.2 
	in Kwasniewski et al.~\cite{confluxArxiv}):	
	
	\vspace{-1.0em}
\begin{align}
\label{eq:findingX}
\nonumber
\max \prod_{t=1}^{\ell}|D^t| & \hspace{2em}\text{s.t.}\\
\nonumber
\sum_{j = 
	1}^{m}|\mathcal{A}_j| \le X \\
\forall 1 \ge t \ge \ell : |D^t|& \ge 1 
\end{align} 
	\vspace{-0.5em}

Solving the above optimization problem yields $\chi(X) = 
|\mathcal{H}_{max}(\bm{D})|$. Since 
Lemma~\ref{lma:rhoAPP} gives a valid upper bound on computational intensity 
for \emph{any} value of $X$, we seek to find the tightest (lowest) upper bound. 
One can obtain $X_0 = \argmin_X \frac{\chi(X)}{X-S}$, since $\chi(X)$ 
is 
differentiable. Finally, combining Lemma~\ref{lma:rectTilingAPP}, 
inequality~\ref{eq:general_lower_bound}, and the optimization 
problem~\ref{eq:findingX}, we obtain the I/O lower bound for the 
single-statement SOAP program:

\begin{equation}
\label{eq:final_bound}
Q \ge |\bm{\mathcal{D}}|\frac{\sum_{j = 
		1}^{m}|\mathcal{A}_j(X_0)| - S}{\prod_{t=1}^{\ell}|D^t(X_0)|},
\end{equation}
\noindent
where $|\mathcal{A}_j(X_0)|$ are the access set sizes obtained from 
Lemma~\ref{lma:rectTilingAPP} for the optimal value of $|D^i|$ derived from the 
optimization problem~\ref{eq:findingX}.
	
Substituting $X_0$ back to $|D^t|(X)$ has a direct interpretation: they 
constitute optimal loop tilings for the maximal subcomputation. Note that such 
tiling might be invalid due to problem relaxations: e.g., we ignore 
loop-carried dependencies and we solve optimization 
problem~\ref{eq:findingX} over real numbers, 
relaxing the integer constraint on $|D^t|$ set sizes. \emph{However, this 
result can serve as a powerful guideline in code generation. Furthermore, if 
derived tiling sizes generate a valid code, it is provably I/O optimal.}

\vspace{-0.5em}
\section{Projecting Programs onto SOAP}
\label{sec:beyondSOAP}
By the definition of SOAP, one input array may be accessed by different access 
function vector components, only if they form the simple overlap access --- 
that 
is, the accesses are offset by a constant stride. However, our analysis may go 
beyond this constraint if additional assumptions are met.

\subsection{Non-Overlapping Access Sets} 
Given input array $A$ and its access function components  $\bm{\phi}(\bm{\psi}) 
= [\bm{\phi}_1(\bm{\psi}_1), \dots, \bm{\phi}_n(\bm{\psi}_n)]$, if all access 
sets are disjoint, that is: $\forall_{i \ne j} \bm{\phi}_i[\mathcal{D}] \cap 
\bm{\phi}_j[\mathcal{D}] = \emptyset$, then we represent it as $n$ disjoint 
input arrays $A_i$ accessed by single corresponding access function component 
$\bm{\phi}_i(\bm{\psi}_i)$.

\begin{eg}
	\label{eg:lu}
	Consider the following code fragment from LU decomposition:
\end{eg}

\vspace{-0.5em}
\begin{lstlisting}[basicstyle=\ttfamily\footnotesize, mathescape=true]
for k in range(N):
  for i in range(k+1,N):
    for j in range(k+1,N):
$St:$       A[i,j] = A[i,j] - A[i,k] * A[k,j]
\end{lstlisting}
\emph{
	The analysis of iteration variables' domains $\mathcal{D}^i, \mathcal{D}^j, 
	\mathcal{D}^k$ shows that for fixed value of $k_0$, there are no two 
	iteration 
	vectors $\bm{\psi}_1 = [k_0, i_1, j_1]$ and $\bm{\psi}_2 = [k_0, i_2, j_2]$ 
	such that $[i_1, k_0] = [k_0, j_2] $ $\lor [i_1, j_1] = [k_0, j_2]$ $\lor 
	[i_1, 
	j_1] = [i_1, k_0]$, therefore, their access sets are disjoint. 
	Furthermore, for $k_0$, all elements from $A$ in range $[(k_0, N), 
	(k_0,N)]$ 
	are updated. Therefore, all accesses of form $[i_1, k_1] = [k_2, j_2]$ 
	access 
	\emph{different} vertices. We model this as a SOAP statement with three 
	disjoint arrays:}

\vspace{-0.5em}
$$St_2: A_1[i,j] = f(A_1[i,j], A_2[i,k], A_3[k,j])$$

\subsection{Equivalent Input-Output Accesses}
If array $A$ is updated by statement $St$ --- i.e., it is both input and output 
--- 
then we require that the output access function $\bm{\phi}_0$ is different 
than 
the input access function $\bm{\phi}_i$. If the input program does not meet 
this requirement, we can add additional ``version dimension'' to access 
functions that is offset by a constant between input and 
output accesses.

\begin{eg}
	Consider again Example~\ref{eg:lu}. Observe that array $A_1$ is updated (it 
	is 
	both the input and the output of $St_2$. Furthermore, both access functions 
	are 
	equal: $\bm{\phi}_0 = \bm{\phi}_1 = [i,j]$. We can associate a unique 
	version 
	(and therefore, a vertex) of each element of $A$ with a corresponding 
	iteration 
	of the $k$ loop. We add the version dimension associated with $k$ and 
	offset it 
	by constant $1$ between input and output:
\end{eg}

\vspace{-0.5em}
$$St_3: A_1[i,j,k+1] = f(A_1[i,j,k], A_2[i,k], A_3[k,j])$$

\subsection{Non-Injective Access Functions}
Given input array $A$ and its access function vector $\bm{\phi}$, we require 
that $\forall \bm{\psi}_i \ne \bm{\psi}_j: A[\bm{\phi}(\bm{\psi}_i)] \ne 
A[\bm{\phi}(\bm{\psi}_j)]$. If this is not the case, then we seek to bound the 
size of such overlap, that is, given subcomputation domain 
$\bm{D}(\mathcal{H})$, how many different iteration vectors $\bm{\psi}_j$ map 
to the same array element $A[\bm{\phi}(\bm{\psi}_i)]$. We can solve this by 
analyzing the iteration domain $\bm{\mathcal{D}}$ and the access function 
vector $\bm{\phi}$. 
If one array dimension is accessed by a function of multiple iteration 
variables $g(\phi^1, \dots, \phi^k)$ and $g$ is linear w.r.t. all $\phi^i$, 
the number of different values $g$ takes in $\bm{D}(\mathcal{H})$ is bounded 
by  $\max_{i = 1, \dots, k}\{|D^i|\} \le |g[\mathcal{H}]| \le \prod_{i 
=1}^{k}|D^i|$, for $ D^i \ne \{0\}, i = 1, \dots, k$.

\begin{eg}
	A single layer of  the direct convolution used in neural networks 
	may be 
	written as 
	seven nested loops with iteration variables $b,c,k,w,h,r,s$ and statement 
	(c.f.~\cite{demmelCNN}):
\end{eg}
\vspace{-1em}
$$St: Out[k,h,w,b] += Image[r + \sigma_w w, s + \sigma_h h, c, b]
\times Filter[k, r, s]$$
\emph{
	Depending on the value of $\sigma_w$ and $\sigma_h$, the access function 
	of  
	$Image$, $\bm{\phi} = [r + \sigma_w w, s + \sigma_h h, c, b]$ may not be 
	injective. Yet, observe that:}
\begin{enumerate}
	\item  $\sigma_w \ge |D^r| \land \sigma_h \ge |D^s| \implies \bm{\phi}$ is 
	injective $\implies $ $|\bm{\phi}[\mathcal{H}_{max}]| \ge 
	|D^r|\cdot|D^w|\cdot|D^s|\cdot|D^h|\cdot|D^c|\cdot|D^b|$
	\item $\sigma_w = 1 \land \sigma_h = 1 \implies 
	|\bm{\phi}[\mathcal{H}_{max}]| \ge$  \linebreak $\max(|D^r|,|D^w|)\cdot 
	\max(|D^s|,|D^h|)\cdot|D^c|\cdot|D^b|$,
\end{enumerate}
\emph{
	Our analysis provides a \emph{conditional} computational intensity:
	$\rho_{min} = \sqrt{S}/2$ in case (1) and $\rho_{max} = S/2$ in case (2). 
	Observe that
	case (2) yields the maximum non-injective overlap (maximum number of 
	different 
	iteration vectors map to the same element in $Image$). For any other values 
	of 
	$\sigma_w$ and $\sigma_h$, we have $\rho_{min} \le \rho \le \rho_{max}$. 
}

\section{Multi-Statement SOAP}
\label{sec:multistatement}
I/O lower bounds are not composable: the I/O cost of a program containing 
multiple statements may be lower than the sum of the I/O costs of each 
statement if evaluated in isolation. Data may be reused and merging 
of 
statements may lower the I/O cost.

Note that the number of vertices in the program's CDAG $G$ depend on 
domain sizes $D^i$ of each iteration variable. However, our derived 
upper bound of the computational intensity $\rho$ is 
\emph{independent} of 
the CDAG size, as it depends only on the access functions 
$\bm{\phi}_j$. This is also true for programs that contain 
multiple 
statements - to bound $\rho$ for multi-statement SOAP, we only need 
to model dependencies between the arrays and how they are accessed - 
e.g., one statement may take as an input an array that is an 
output 
of a different statement.

We represent the data flow between the program statements with a 
\emph{symbolic}
directed graph $G_S = (V_S, E_S)$.
For a given statement $St_i$, denote $In(St_i) = \{A_{i,1}, \dots, 
A_{i,m}\}$ a set of input arrays of statement $St_i$. Analogously, 
denote $Out(St_i)$ the set containing the output array of $St_i$. 
Analogously to 
program CDAG $G$ that captured dependencies between 
particular array elements, $G_S$ models dependencies 
between whole arrays (Figure~\ref{fig:soap_flow}).

\begin{defn} \textbf{Symbolic Digraph: SDG}
	\label{sec:sg}
	Given $k$-statement SOAP ${St_1, \dots, St_k}$, its 
	\emph{symbolic digraph (SDG)} $G_S = (V_S, E_S)$ 
	is a directed 
	graph where $V_S = \bigcup_{i=1}^k 
	(\text{In}(St_i) 
	\cup 
	\text{Out}(St_i))$ and $(A_u, A_v) \in E_S \iff \exists
	St_i : A_u \in In(St_i) \land A_v \in Out(St_i)$.
\end{defn} 

\noindent
$G_S$ is a directed graph, where vertices represent arrays accessed 
by a program, and edges represent data dependencies between 
them. Two arrays $A_u$ and $A_v$ are connected if there is a 
statement that accesses $A_u$ and computes $A_v$. Each edge 
is annotated with 
the 
corresponding access function vector of the statement that 
generates it.

\begin{eg}
	Consider the example in Figure~\ref{fig:soap_flow}. We have two statements 
	$St_1$ and $St_2$, with $\text{In}(St_1) = \{A, B\}$, $\text{Out}(St_1) = 
	\{C\}$, $\text{In}(St_2) = \{C, D, E\}$, $\text{Out}(St_2) = \{E\}$. We 
	then 
	construct the SDG $G_S = (V_S, E_S)$, with $V_S = \text{In}(St_1) \cup 
	\text{Out}(St_1) \cup \text{In}(St_2) \cup \text{Out}(St_2) = \{A, B, C, D, 
	E\}$. Furthermore, we have edges $E_S = \{(A,C), (B,C), (C,E), (D,E), 
	(E,E)\}$. The 
	edges are annotated with the corresponding access function vectors 
	$\bm{\phi}_{St1,1}, \dots, \bm{\phi}_{St2, 3}$.
\end{eg}

\textbf{Note: } While the ``explicit'' program CDAG $G=(V,E)$, where every 
vertex 
represents a single computation is indeed acyclic, the SDG $G_S = 
(V_S, E_S)$ may contain self-edges when a statement updates the loaded array 
($(E,E)$ in the example above). In $G$, one vertex corresponds to 
\emph{one version} of a single array element, while in $G_S$, one 
vertex encapsulates \emph{all versions} of all array elements.

\subsection{SDG Subgraphs}
\label{sec:sdg_subgraphs}
Denote $I \subset V_S$ set of input vertices of $G_S$ ($\forall A \in I: 
\text{indegree}(A)= 0$).
Let $H \subset V_S \setminus I$ be a subset of the vertices 
of SDG $G_S=(V_S, 
E_S)$. 
The SDG subgraph $G_S[H]$ is a 
subgraph of $G_S$ induced by the vertex set $H$. It 
corresponds to some subcomputation in which at least one 
vertex from each array in $H$ was computed. We now 
use the analogous strategy to the \xparting abstraction: 
since the optimal pebbling has an associated \xpart with 
certain properties (the dominator set constraint), we  bound the cost of any 
pebbling by finding the maximum 
subcomputation among \emph{all} valid $X$-partitions. We now 
show 
that every subcomputation in the optimal \xpart has a 
corresponding SDG 
subgraph $G_S[H]$. Therefore, finding $G_S[H_{opt}]$ that 
maximizes the computational intensity among \emph{all} 
subgraphs bounds the size of the maximal subcomputation 
(which, in turn, bounds the I/O cost of any pebbling).

Recall that an optimal pebbling $P$ has an associated \xpart 
$\mathcal{P}(X)$, where each $\mathcal{H} \in 
\mathcal{P}(X)$ represents a sequence of operations that 
are not interleaved with 
other subcomputations. Given $G_S$, each 
$\mathcal{H} \in \mathcal{P}(X)$ has an associated subgraph 
$G_S[H]$ s.t. every array vertex $A_i \in H$ represents an array 
from which at least one vertex was computed in $\mathcal{H}$.

Note that both the pebbling $P$ and the partition 
$\mathcal{P}(X)$ depend on the size of the CDAG that is 
determined by the sizes of the iteration domains $D^i$. 
However, the SDG does not depend on them. Thus, by finding 
the subgraph that maximizes the computational intensity, we 
bound $\rho$ for \emph{any} combination of input parameters. 

\begin{defn}
	The \textbf{subgraph SOAP statement} $St_H$ of subgraph 
	$G_S[H]$ is a single SOAP statement
	with the input  $In(St_H) = \{A 
	: A 
	\notin H \land \exists B \in H : (A,B) \in E_S\}$. Additionally, for each 
	vertex $B \in H$ that is not computed in $H$, that is 
	$\nexists A \in H: 
	(A,B) \in E_S$ , self-edges $(B,B) \in E$ are preserved ($B \in In(St_H)$).
\end{defn}

\noindent
\begin{Intuition}
The subgraph statement $St_H$ is a ``virtual'' SOAP statement 
that 
encapsulates multiple statements $St_1, \dots, St_k$. 
Given $H$, its subgraph statement's inputs $In(St_H)$ are 
formed 
by merging inputs $\bigcup_{i=1}^k In(St_i) \setminus V(H)$ 
from all statements that form $H$, but are not 
in $H$. By the construction of the SDG, this is 
equivalent to the definition 
above: take all vertices $A \in V_s \setminus V(H)$ that 
have a child in $V(H)$, that is $\exists B \in V(H) : (A,B) 
\in E_S$ (see Figure~\ref{fig:soap_flow}).

This forms the lower bound on the number of inputs for a 
corresponding subcomputation $\mathcal{H}$: all the vertices from 
arrays $A_i \in V(H)$ could potentially be computed during 
$\mathcal{H}$ and do not need to be loaded, but at least vertices 
from arrays $In(St_H)$ have to be accessed.
\end{Intuition}

\begin{eg}
	Consider again the example from Figure~\ref{fig:soap_flow}. 
	The set of input nodes is $I = \{A, B, D\}$. There are three 
	possible subgraph statements: $H_1 = \{C\}$, with $In(St_{H_1}) = 
	\{A,B\}$,  $H_2 = \{C\}$ with $In(St_{H_2}) = \{C,D,E\}$, and $H_3 = 
	\{C,E\}$ with $In(St_{H_3}) = \{A,B,D\}$. Note that by definition, the 
	self-edge $(C,C)$ is preserved in $H_2$, but not in $H_3$. Subgraphs $H_1$ 
	and $H_2$ 
	correspond to the input statements $St_1$ and $St_2$. Subgraph $H_3$ 
	encapsulates a subcomputation $\mathcal{H}$ that computes 
	some vertices 
	from both arrays $C$ and $E$, merging subcomputations $St_1$ and 
	$St_2$ and reusing outputs from $St_1$ to compute $E$.	
\end{eg}

Then, we 
establish the following lemma:

\begin{lma}
	\label{lma:sgpartition}
	Given an \xpart $\mathcal{P}(X) = \{\mathcal{H}_1, 
	\dots, \mathcal{H}_s\}$ of the $k$-statement SOAP, with its 
	corresponding 
	$G_S=(V_S,E_S)$, each subcomputation $\mathcal{H}$ has an 
	associated 
	intensity $\rho_\mathcal{H} = 
	\frac{|\mathcal{H}|}{|Dom_{min}(\mathcal{H})| - S}$ that 
	is 
	upper-bounded by the computational intensity of the subgraph 
	statement $St_H$ (Lemma~\ref{lma:rhoAPP}).
\end{lma}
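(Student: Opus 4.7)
The plan is to recognize $\mathcal{H}$ as a valid subcomputation of the virtual SOAP statement $St_H$ and then transfer the intensity bound from Lemma~\ref{lma:rhoAPP}, which governs single statements, to the multi-statement setting through the SDG abstraction.

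First I would construct $H$ from $\mathcal{H}$ canonically: put $A \in V_S \setminus I$ in $H$ iff at least one element-vertex of $A$ is computed inside $\mathcal{H}$. By the SOAP-to-CDAG correspondence, every non-input CDAG vertex is associated with a unique array, so the assignment is well-defined and every vertex in $\mathcal{H}$ is attributable to some array in $H$. The induced subgraph $G_S[H]$ together with the definition of $St_H$ then specifies a single (merged) SOAP-style statement whose outputs are the members of $H$ and whose inputs are $In(St_H) = \{A \notin H : \exists B \in H,\, (A,B) \in E_S\}$, with self-edges $(B,B)$ preserved for any $B \in H$ whose producers lie outside $H$.

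Second, I would verify the two inequalities needed to compare $\rho_\mathcal{H}$ with the intensity of $St_H$. For the numerator, each CDAG vertex computed in $\mathcal{H}$ is an evaluation of some original $St_i$ with output array in $H$, so $\mathcal{H}$ is contained in the set of CDAG vertices producible by $St_H$ over its combined iteration domain. For the denominator, every vertex of an array $A \in In(St_H)$ that is consumed during $\mathcal{H}$ is by construction \emph{not} produced inside $\mathcal{H}$, hence it must already carry a blue or red pebble at the time of access and thus belongs to $Dom_{min}(\mathcal{H})$; the preserved self-edges ensure that prior ``versions'' of arrays $B \in H$ whose producers lie outside $H$ are also counted. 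Therefore $|Dom_{min}(\mathcal{H})|$ is at least the sum of access-set sizes of the inputs of $St_H$ restricted to $\mathcal{H}$.

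Finally, I would invoke Lemma~\ref{lma:rhoAPP} directly on $St_H$: among all subcomputations with a fixed subcomputation domain, the rectangular one maximizes $\delta = |\mathcal{H}|/|Dom_{min}(\mathcal{H})|$. Combined with the $X$-bounded optimization of Section~\ref{sec:final_bound} applied to $St_H$, this upper-bounds $\rho_\mathcal{H}$ by the computational intensity of $St_H$. The main obstacle will be the bookkeeping between the iteration vectors of the constituent statements $St_{i_1}, \dots, St_{i_p}$ contributing to $\mathcal{H}$ and the single iteration-vector space used by $St_H$: one must argue that the per-array access functions inherited from the edge annotations of $G_S$ supply enough structure to count dominators as in the single-statement proof, without needing to track per-element dependencies. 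The SDG's per-array (rather than per-element) granularity is precisely the abstraction that makes this counting go through, since Lemma~\ref{lma:rhoAPP} only needs one access function vector per accessed input array, which is exactly what each incoming edge of $G_S[H]$ provides.
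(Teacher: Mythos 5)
Your proposal is correct and follows essentially the same route as the paper's proof: you construct $H$ canonically from the arrays touched by $\mathcal{H}$, argue that $In(St_H)$ under-approximates the arrays that must actually be loaded (so the dominator requirement of $St_H$ lower-bounds $|Dom_{min}(\mathcal{H})|$), note that every vertex computed in $\mathcal{H}$ belongs to an array in $H$, and then transfer the intensity bound via Lemma~\ref{lma:rhoAPP}. Your explicit treatment of the pebble/self-edge accounting and the iteration-vector bookkeeping is somewhat more detailed than the paper's, but the underlying argument is the same.
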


\begin{proof}
	Recall that given the subcomputation $\mathcal{H}$, its 
	corresponding SDG subgraph $H$ is constructed as follows: 
	for each vertex $v \in V$ computed 
	during $\mathcal{H}$ belonging to some array $A_i$, add the 
	corresponding array vertex $s_i$ to $H$. Note that we allow 
	a vertex recomputation: if some vertex is (re)computed during the 
	optimal schedule of $\mathcal{H}$, its array vertex will 
	belong to $H$.
	
	Observe that by this construction and by definition of the 
	subgraph statement, all arrays from which at least one vertex is 
	loaded during $\mathcal{H}$ are in $In(St_H)$. Furthermore, 
	$In(St_H)$ is a subset of these arrays: during $\mathcal{H}$, 
	there might be some loaded vertex from array $A_j \in H$, but, 
	by definition of $St_H$, this array will not be in $In(St_H)$. 
	Therefore, $St_H$ lower bounds the input size of $\mathcal{H}$.
	
	The last step of the proof is to observe that by 
	Lemma~\ref{lma:rhoAPP}, the computational intensity of $St_H$ 
	bounds the maximum number of computed vertices for any 
	$\mathcal{H}' \in \mathcal{P}(X)$  that belong to $H$, 
	that 
	is, the union of all arrays in $H$. But since all vertices 
	that are computed in $\mathcal{H}$ belong to one of these 
	arrays, $\mathcal{H}$ cannot have higher computational 
	intensity.
\end{proof}

\vspace{-1.0em}
\subsection{SDG I/O Lower Bounds}
\label{sec:sdg_lowerbounds}

We now proceed to establish a method to derive the I/O lower bounds 
of the multi-statement SOAP given its SDG $G_S = (V_S, E_S)$.

For each array vertex $A \in V_S$, denote $|A|$ as the total 
number 
of vertices in the CDAG that belong to array $A$. Denote 
further 
$\mathcal{S}(A)$ the set of all subgraphs of $G_S$ that 
contain $A$. 
Then we prove the following theorem:

\begin{thm}
	\label{thm:sdg_lowerbound}
	The I/O cost $Q$ of a $k$-statement SOAP represented by the SDG 
	$G_S = (V_S, E_S)$ is bounded by
	\begin{equation}
	\label{eq:sdglowerbound}
	Q \ge \sum_{A \in V_S} \frac{|A|}{\max_{H \in 
	\mathcal{S}(A)} 
	\rho_{H}}
	\end{equation}
	\noindent
	where $\max_{H \in \mathcal{S}(A)} \rho_{H}$ is the 
	maximum 
	computational intensity over all subgraph statements of subgraphs 
	$H$ that contain vertex $A$.
\end{thm}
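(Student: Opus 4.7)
The plan is to prove the bound by charging I/O to individual array vertices and then invoking Lemma~\ref{lma:sgpartition} to upper-bound the per-subcomputation computational intensity by the best subgraph statement intensity.

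First I would start from any valid pebbling $P$ and its associated \xpart $\mathcal{P}(X) = \{\mathcal{H}_1,\dots,\mathcal{H}_s\}$ guaranteed by the construction in Section~\ref{sec:intro_lowbound}. For each subcomputation $\mathcal{H}_i$, let $H_i \subseteq V_S \setminus I$ be its associated SDG subgraph (one array vertex for every array from which at least one CDAG vertex is computed during $\mathcal{H}_i$). The \xpart property $|Dom_{min}(\mathcal{H}_i)| \le X$ together with the definition $\rho_{\mathcal{H}_i} = |\mathcal{H}_i|/(|Dom_{min}(\mathcal{H}_i)|-S)$ gives the per-block I/O bound $|Dom_{min}(\mathcal{H}_i)|-S \ge |\mathcal{H}_i|/\rho_{\mathcal{H}_i}$, and Lemma~\ref{lma:sgpartition} then yields $|Dom_{min}(\mathcal{H}_i)|-S \ge |\mathcal{H}_i|/\rho_{St_{H_i}}$. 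Summing these block I/O costs over the partition recovers a lower bound on $Q$ (modulo the standard $\pm(X-S)$ slack between pebbling length and \xpart size, which I would absorb by assuming, without loss of generality, a large enough CDAG).

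The key combinatorial step is a two-level re-summation. I would write $|\mathcal{H}_i| = \sum_{A \in V_S \setminus I} |\mathcal{H}_i \cap A|$, since every non-input CDAG vertex belongs to exactly one array, and substitute into the sum
\begin{equation*}
Q \;\ge\; \sum_{i=1}^{s} \frac{|\mathcal{H}_i|}{\rho_{St_{H_i}}} \;=\; \sum_{i=1}^{s}\sum_{A \in V_S \setminus I} \frac{|\mathcal{H}_i \cap A|}{\rho_{St_{H_i}}}.
\end{equation*}
Whenever $|\mathcal{H}_i \cap A| > 0$ the vertex $A$ belongs to $H_i$ by construction, so $H_i \in \mathcal{S}(A)$ and therefore $\rho_{St_{H_i}} \le \max_{H \in \mathcal{S}(A)} \rho_H$. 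Swapping the order of summation and using this bound inside the fraction gives
\begin{equation*}
Q \;\ge\; \sum_{A \in V_S \setminus I} \frac{1}{\max_{H \in \mathcal{S}(A)} \rho_H}\sum_{i=1}^{s} |\mathcal{H}_i \cap A|.
\end{equation*}
Finally, since $\mathcal{P}(X)$ partitions $V$, $\sum_i |\mathcal{H}_i \cap A| = |A|$, giving the claimed inequality~(\ref{eq:sdglowerbound}) (extending the sum to $I$ contributes trivially if one defines the corresponding term to vanish, or one restricts the sum to $V_S \setminus I$).

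The main obstacle I anticipate is not the arithmetic but justifying that each $\rho_{St_{H_i}}$ bound from Lemma~\ref{lma:sgpartition} is genuinely available with a uniform value of $X$ across subcomputations, since the subcomputations in a single \xpart all share the same $X$ while $\max_{H \in \mathcal{S}(A)} \rho_H$ implicitly optimizes $X$ per subgraph. I would address this by observing that the bound $|Dom_{min}(\mathcal{H}_i)|-S \ge |\mathcal{H}_i|/\rho_{St_{H_i}}$ only ever uses $\rho_{St_{H_i}}$ as an \emph{upper} bound, so replacing it by the tighter $\max_{H \in \mathcal{S}(A)} \rho_H$ is safe. A secondary subtlety is recomputation: if a vertex is computed multiple times, the $X$-partition is still a partition of $V$ by construction (each vertex is associated with one "final" computation event), so the identity $\sum_i |\mathcal{H}_i \cap A| = |A|$ still holds, and any extra recompute work only strengthens the bound.
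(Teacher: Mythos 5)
Your proposal is correct and takes essentially the same approach as the paper: it applies Lemma~\ref{lma:sgpartition} to each subcomputation of the \xpart and then charges the resulting per-block I/O to array vertices via $\mathcal{S}(A)$, with your explicit double sum $\sum_i \sum_{A} |\mathcal{H}_i \cap A| / \rho_{St_{H_i}}$ and the summation swap being a more rigorous rendering of the paper's informal ``sum the I/O costs per array'' step. You also correctly flag the two subtleties (uniformity of $X$ across subcomputations and recomputation) that the paper's own proof leaves implicit.
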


\begin{proof}
	This theorem is a direct consequence of 
	Lemma~\ref{lma:sgpartition} and the fact that all vertices in 
	CDAG $G$ are associated with some array vertex in SDG $G_S$. 
	Lemma~\ref{lma:sgpartition}, together with the definition of 
	$\mathcal{S}(a)$, states that $\max_{H \in 
	\mathcal{S}(a)} 
	\rho_{H}$ is the upper bound on any subcomputation 
	$\mathcal{H}$ that contains any vertex from array $a$. 
	Since 
	there are $|a|$ vertices associated with $a$, at least 
	$\frac{|a|}{\max_{H \in \mathcal{S}(a)} 
	\rho_{H}}$ I/O operations must be performed to compute these 
	vertices. Since the computational intensity expresses the average 
	cost \emph{per vertex}, even if some subcomputation in an optimal 
	\xpart spans more than one array, this is already modeled by the 
	set $\mathcal{S}(a)$. Therefore, we can sum the I/O costs 
	per 
	arrays $a$, yielding inequality~\ref{eq:sdglowerbound}. 
\end{proof}

Note that applying Theorem~\ref{thm:sdg_lowerbound} requires iterating over all 
possible subgraphs. In the worst case, this yields exponential complexity, 
prohibiting scaling our method to large programs. However, many scientific 
applications contain a limited number of kernels with simple dependencies. In 
practice we observed that our approach scales well to programs containing up to 
35 statements.

\section{Evaluation}
We evaluate our lower bound analysis on a wide range of applications, ranging 
from fundamental computational kernels and solvers to full workloads in 
hydrodynamics, numerical weather prediction, and deep learning. The set of 
applications covers both the previously analyzed kernels 
(the Polybench suite~\cite{polybench}, direct convolution), and kernels that were 
never analyzed before due to complicated dependency structures (multiple NN 
layers, diffusion, advection). Not only our tool covers broader class of 
programs than state-of-the-art approaches, but also it improves bounds 
generated by methods dedicated to specific narrower 
classes~\cite{olivry2020automated}. Improving I/O lower bounds has not only 
theoretical implications: loose bounds may not be applicable for generating 
corresponding parallel codes, as too many overapproximations may yield an 
 invalid schedule.

In our experiments we use DaCe~\cite{dace} to extract SOAP statements from 
Python and C code, and use MATLAB for symbolic analysis.
\iftr
; see Appendix A for 
implementation details.
\fi

\begin{table} 
	\begin{tabular}{l|lll} 
		\toprule 
		&  &  & 
		\textbf{Improv.}\\ 
		& \textbf{Kernel} & \textbf{SOAP I/O Bound} & 
		\textbf{over SotA}\\ 
		
		\midrule\multirow{32}{*}{\begin{turn}{90}\hspace{-3em}\textbf{Polybench\cite{olivry2020automated}}\end{turn}}&
		adi & $\frac{12 N^2 T}{\sqrt{S}}$ & 
		$\frac{12}{\sqrt{S}}$ 
		\\
		& atax & $M N$ & $1$\\ 
		& bicg & $M N$ & $1$\\ 
		& cholesky & $\frac{N^3}{3 \sqrt{S}}$ & $2$\\ 
		& correlation & $\frac{M^2 N}{\sqrt{S}}$ & $2$\\ 
		& covariance & $\frac{M^2 N}{\sqrt{S}}$ & $2$\\ 
		& deriche & $3 H W$ & $3$\\ 
		& doitgen & $\frac{2 N_P^2 N_Q N_R}{\sqrt{S}}$ & 
		$1$\\ 
		& durbin & $\frac{3 N^2}{2}$ & $3$\\ 
		& fdtd2d & $\frac{2 \sqrt{3} N_X 
		N_Y T}{\sqrt{S}}$ & $6\sqrt{6}$\\ 
		& floyd-warshall & $\frac{2 N^3}{\sqrt{S}}$ & $2$\\ 
		& gemm & $\frac{2 N^2}{\sqrt{S}}$ & $1$\\ 
		& gemver & $N^2$ & $1$\\ 
		& gesummv & $2 N^2$ & $1$\\ 
		& gramschmidt & $\frac{M N^2}{\sqrt{S}}$ & $1$\\ 
		& heat3d & $\frac{6 N^3 T}{\sqrt[3]{S}}$ & 
		$\frac{32}{3\sqrt[3]{3}}$\\ 
		& jacobi1d & $\frac{2 N T}{S}$ & $8$\\ 
		& jacobi2d & $\frac{4 N^2 T}{\sqrt{S}}$ & 
		$6\sqrt{3}$\\ 
		& 2mm & $\frac{4 N^3}{\sqrt{S}}$ & $1$\\ 
		& 3mm & $\frac{6 N^3}{\sqrt{S}}$ & $1$\\ 
		& lu & $\frac{2 N^3}{3 \sqrt{S}}$ & $1$\\ 
		& ludcmp & $\frac{2 N^3}{3 \sqrt{S}}$ & $1$\\ 
		& mvt & $N^2$ & $1$\\ 
		& nussinov & $\frac{N^3}{3 \sqrt{S}}$ & $2$\\ 
		& seidel2d & $\frac{4 N^2 T}{\sqrt{S}}$ & $6 
		\sqrt{3}$\\ 
		& symm & $\frac{2 M^2 N}{\sqrt{S}}$ & $1$\\ 
		& syr2k & $\frac{2 M N^2}{\sqrt{S}}$ & $2$\\ 
		& syrk & $\frac{M N^2}{\sqrt{S}}$ & $2$\\ 
		& trisolv & $\frac{N^2}{2}$ & $1$\\ 
		& trmm & $\frac{M^2 N}{\sqrt{S}}$ & $1$\\ 
\midrule\multirow{6}{*}{\begin{turn}{90}\textbf{Neural 
Networks}\end{turn}} 
& Direct conv. & $\frac{2 C_{in} C_{out} H_{out} N 
W_{out} W_{ker} 
H_{ker}}{\sqrt{S}}$ & 8 \\
& Softmax & $4 B H M N$ & ---\\
& MLP & $\frac{2 N (fc_{1} fc_{2} + fc_1 inp 
	+ fc_2 
	out)}{\sqrt{S}}$ & ---\\ 
& LeNet-5 & $\frac{300 \sqrt{2} C H N W}{\sqrt{S}}$ & 
---\\ 
& BERT Encoder & 
$\frac{4\,B\,H\,P\,L\,\left(L+2\,H\,P\right)}{\sqrt{S}}$ 
& ---\\ 
\midrule\multirow{3}{*}{\begin{turn}{90}\textbf{Various}\end{turn}} 
& LULESH & $22 \cdot \text{numElem}$ & ---\\
& horizontal diff. & $2 I J K$ & ---\\ 
& vertical adv. & $5 I J K$ & 
---\\ 
\bottomrule 
\end{tabular} 
	\caption{\small{Simplified leading-order terms of the I/O 
	lower 
	bounds 
	extracted from multi-statement SOAP and 
	previous state-of-the-art. For the direct convolution layer, the best 
	previously known bound was published by Zhang et 
	al.~\cite{chineseCNN1}.}
} 
	\label{tbl:bounds} 
\end{table}

\macsection{Polybench}
As our first case study, we analyze Polybench~\cite{polybench}, a polyhedral 
application benchmark suite composed of 30 programs from several domains, 
including 
linear algebra kernels, linear solvers, data mining, and computational biology. 
Prior best results were obtained by IOLB~\cite{olivry2020automated}, a tool 
specifically designed for analyzing I/O lower bounds of affine programs.
We summarize the results in Table~\ref{tbl:bounds}, listing the leading order 
term for brevity.

We find that SOAP analysis derives tight I/O lower bounds for 
all Polybench kernels.
Analyzing these programs as multi-statement 
SOAP either reproduces existing tight bounds, or improves 
them by constant 
factors (e.g., in Cholesky decomposition) on 14 out of 30 
applications (Table~\ref{tbl:bounds}). Of 
particular note is \texttt{adi} (Alternating Direction 
Implicit 
solver). Our algorithm detected a possible tiling in the time 
dimension, yielding the lower bound $(12N^2T)/\sqrt{S}$, 
compared 
to $N^2T$ reported by Olivry et al.~\cite{olivry2020automated}. 
However, due to dependency chains incurred by alternating 
directions, 
such tiling may 
violate loop-carried dependency constraints, which our algorithm 
relaxes. A parallel machine could potentially take advantage of this 
tiling scheme, possibly providing super-linear communication 
reduction. However, this is outside of the scope of this paper.

\macsection{Neural Networks} 
Analyzing I/O lower bounds of neural networks is a nascent field, and so far
only single-layer convolution was analyzed~\cite{demmelCNN,chineseCNN1}. 
We improve the previously-reported bound reported by 
Zhang et al.~\cite{chineseCNN1} by a factor of 8.

\subsection{New Lower Bounds}

Analyzing SOAP and the SDG representation enables capturing complex
data dependencies in programs with a large number of statements. To demonstrate 
this, we study 
larger 
programs in three fields, where no previous I/O bounds are 
known.
If an application contains both 
SOAP and data-dependent kernels, we find a SOAP 
representation that bounds the access sizes from below.

\macsection{LULESH}
The Livermore Unstructured Lagrangian Explicit Shock Hydrodynamics 
(LULESH)~\cite{lulesh} application is an unstructured phy-sics simulation. We analyze the main computational kernel, totaling over 60\% of runtime within one time-step of the simulation from the full C++ source code. As 
LULESH falls outside the purview of affine programs, this result is the 
first reported I/O lower bound.

\macsection{Numerical Weather Prediction}
We select two benchmark stencil applications from the COSMO Weather 
Model~\cite{cosmo} --- horizontal diffusion and vertical 
advection --- representatives of the two major workload types in the model's dynamical core.

\macsection{Deep Neural Networks}
For deep learning, we choose both individual representative operators (Convolution and Softmax) and network-scale 
benchmarks. Previous approaches only study data movement 
empirically~\cite{ivanov20}. To the best of our knowledge, we are the 
first to 
obtain 
I/O lower bounds for full networks, including a Multi-Layer Perceptron 
(MLP), the LeNet-5 CNN~\cite{lenet}, and a BERT Transformer 
encoder~\cite{transformer}.

\vspace{-0.5em}
\section{Related Work}

I/O analysis spans almost the entire history of general-purpose computer 
architectures, and graph pebbling abstractions were among the first methods to 
model memory requirements.
Dating back to challenges with the register allocation 
problem~\cite{completeRegisterProblems}, pebbles were also used to prove 
space-time tradeoffs~\cite{pebbleTradeoffs} and maximum parallel speedups by 
investigating circuit depths~\cite{dymond1985speedups}. Arguably the most 
influential pebbling abstraction work is the red-blue pebble game by Hong and 
Kung~\cite{redblue} that explicitly models load and store 
operations in a 
two-level-deep memory hierarchy. This work was extended numerous times, by: 
adding blocked access~\cite{externalMem}, multiple memory 
hierarchies~\cite{redblueHierarchy}, or introducing additional pebbles to allow CDAG 
compositions~\cite{redbluewhite}. Demaine and Liu proved that finding the 
optimal pebbling in a standard and no-deletion red-blue pebble game is PSPACE-complete~\cite{redbluehardSPAA}. Papp and Wattenhofer introduced a game variant 
with a non-zero computation cost and investigated pebbling approximation 
algorithms~\cite{papp2020hardness}.

Although the importance of data movement minimization is beyond doubt, the 
general solution for arbitrary algorithms is still an open problem. 
Therefore, many works were dedicated to investigate lower bounds only for 
single algorithms (often with accompanying implementations), 
like matrix-matrix multiplication~\cite{loomisApplied,2.5DLU,CARMA,COSMA}, 
LU~\cite{2.5DLU} and Cholesky decompositions~\cite{cholesky1, choleskyQRnew}. 
Ballard et al.~\cite{ballard2014communication} present an extensive collection 
of linear algebra algorithms.
Moreover, a large body of work exists for minimizing communication in irregular
algorithms~\cite{besta2015accelerating, sakr2020future}, such as Betweenness
Centrality~\cite{maciejBC}, min cuts~\cite{gianinazzi2018communication},
BFS~\cite{slimsell}, matchings~\cite{besta2020substream}, vertex similarity
coefficients~\cite{besta2020communication}, or general graph
computations~\cite{besta2017push, besta2021graphminesuite,
besta2021graphminesuite}.
Many of them use linear algebra based formulations~\cite{kepner2016mathematical}.
Recently, convolution networks gained high attention. The first asymptotic I/O 
lower bound for single-layer direct convolution was proved by Demmel et 
al.~\cite{demmelCNN}. Chen et al.~\cite{chineseCNN2} propose a matching 
implementation, and Zhang et al.~\cite{chineseCNN1} present the first 
non-asymptotic I/O lower bound for Winograd convolution.

In parallel with the development of I/O minimizing implementations for 
particular algorithms, several works investigated I/O lower bounds for whole 
classes of programs. Christ et al.~\cite{general_arrays} use a discrete version 
of Loomis-Whitney inequality to derive asymptotic lower bounds for 
single-statement programs nested in affine loops. Demmel and 
Rusciano~\cite{demmelHBL} extended this work and use discrete 
Hölder-Brascamp-Lieb inequalities to find optimal tilings for such programs.
{The polyhedral model}~\cite{polyhedralModel} {is widely used in practice
by many 
compilers}~\cite{pluto, polly}. {However, polyhedral methods 
have their own limitations: 1) they cannot capture 
non-affine loops}~\cite{loopcounting}; {2) while the representation of a 
program 
is polynomial, finding optimal transformations is still 
NP-hard}~\cite{loopFusionComplexity};{ 3) they are 
inapplicable for 
many neural network architectures, e.g., the Winograd algorithm for 
convolution}~\cite{chineseCNN1}. 
 
Recently, Olivry et al.~\cite{olivry2020automated} presented IOLB --- a tool 
for automatic derivation of non-parametric I/O lower bounds for programs 
that can be modeled by the polyhedral framework. IOLB employs both 
``geometric'' projection-based bounds based on the HBL 
inequality~\cite{demmel1}, as well as the wavefront-based approach from 
Elango~\cite{elango2014characterizing}. To the best of our knowledge, this is 
the only 
method that can handle multiple-statement programs. However, the IOLB model 
explicitly disallows recomputation that may be used to 
decrease the I/O cost, 
e.g., in the Winograd convolution algorithm, backpropagation, or vertical advection.
Furthermore, the framework is strictly limited to affine access programs. 
Even then, our method is able to improve those bounds by up to a factor of $6\sqrt{6}$ (\texttt{fdtd2d}) using a 
single, 
general method without the need to use application-specific techniques, such as wavefront-based reasoning.

\vspace{-0.5em}
\section{Conclusions}
In this work we introduce SOAP --- a broad class of statically 
analyzable programs. 
Using the explicit assumptions on the allowed overlap between 
arrays, we are able to precisely count the number of accessed 
vertices on the induced parametric CDAG. This stands in contrast with many state-of-the art 
approaches that are based on bounding projection sizes, as 
they need to underapproximate their union size, often 
resulting in a significant slack in constant factors of their 
bounds.
%
Our single method is 
able to reproduce or improve existing lower bounds for 
many important scientific kernels from various domains, 
ranging from 2$\times$ increase in the lower bound for linear 
algebra (\texttt{cholesky}, \texttt{syrk}), to more than 10$\times$ 
for stencil applications (\texttt{fdtd2d}, \texttt{heat3d}).

Our SDG abstraction precisely models data dependencies in 
multiple-statement programs. It directly captures input and 
output reuse, and allows data recomputation. Armed with these 
tools, we are the first to establish I/O lower bounds for 
entire neural networks, as well as core components of the popular Transformer architecture.

We believe that our work will be further extended to handle 
data-dependent accesses (e.g., sparse matrices), as well as 
scale better with input program size. The derived maximum 
subcomputation sizes can guide compiler optimizations and 
development of new communication-optimal algorithms through 
tiling, parallelization, or loop fusion transformations.

\vspace{-0.5em}
\section{Acknowledgements}
This project received funding from the European Research Council (ERC) under the European 
Union’s Horizon
2020 programme (grant agreement DAPP, No. 678880). Tal Ben-Nun is supported by the Swiss 
National Science Foundation (Ambizione Project \#185778). The authors wish to thank the 
Swiss National Supercomputing Center (CSCS) for providing computing infrastructure and 
support.

\bibliographystyle{IEEEtran}
\bibliography{refs}

\end{document}
\endinput